\newtheorem{theorem}{Theorem}[section]
\newtheorem{corollary}[theorem]{Corollary}
\newtheorem{lemma}[theorem]{Lemma}
\newtheorem{definition}[theorem]{Definition}
\newtheorem{proposition}[theorem]{Proposition}
\newtheorem{fact}[theorem]{Fact}
\newif\ifFULL
\DeclareMathOperator*{\argmin}{arg\,min}
\newcommand{\haotian}[1]{{\color{blue} \textbf{Haotian:} #1}}
\newcommand{\sidford}[1]{{\color{green} \textbf{Sidford:} #1}}
\newcommand{\E}{\mathbb{E}}
\newcommand{\R}{\mathbb{R}}
\newcommand{\eps}{\varepsilon}
\newcommand{\Ot}{\widetilde{O}}
\newcommand{\poly}{\mathsf{poly}}
\newcommand{\norm}[1]{\left\Vert #1\right\Vert}
\newcommand{\epsopt}{\epsilon_{\mathsf{opt}}}
\renewcommand{\argmin}{\mathrm{argmin}} 
\newcommand{\otilde}{\widetilde{O}}
\newcommand{\Z}{\mathbb{Z}}
\newcommand{\defeq}{:=}
\title{Parallel Submodular Function Minimization}
\author{Anonymous Authors}
\author{
Deeparnab Chakrabarty 
\thanks{Dartmouth College, Hanover, USA. \texttt{deeparnab@dartmouth.edu}. Supported in part by NSF grant \#2041920}
\and 
Andrei Graur \thanks{Stanford University, Stanford, USA. \texttt{agraur@stanford.edu}.}
\and 
Haotian Jiang \thanks{Microsoft Research, Redmond, USA. \texttt{jhtdavid96@gmail.com}.}
\and 
Aaron Sidford \thanks{Stanford University, Stanford, USA. \texttt{ sidford@stanford.edu}. Supported in part by a Microsoft Research Faculty Fellowship, NSF CAREER Award CCF-1844855, NSF Grant CCF-1955039, a PayPal research award, and a Sloan Research Fellowship.}
}
\date{\today}
\begin{document}

\begin{titlepage}
  \maketitle

\begin{abstract}
We consider the parallel complexity of submodular function minimization (SFM). 
We provide a pair of methods which obtain two new query versus depth trade-offs a submodular function defined on subsets of $n$ elements that has integer values between $-M$ and $M$. The first method has depth $2$ and query complexity $n^{O(M)}$ and the second method has depth $\widetilde{O}(n^{1/3} M^{2/3})$ and query complexity $O(\mathrm{poly}(n, M))$. Despite a line of work on improved parallel lower bounds for SFM, prior to our work the only known algorithms for parallel SFM either followed from more general methods for sequential SFM or highly-parallel minimization of convex $\ell_2$-Lipschitz functions. Interestingly, to obtain our second result we provide the first highly-parallel algorithm for minimizing $\ell_\infty$-Lipschitz function over the hypercube which obtains near-optimal depth for obtaining constant accuracy.

\end{abstract}

{
  \hypersetup{linkcolor=black}
  \tableofcontents
}

  \thispagestyle{empty}
\end{titlepage}

\section{Introduction}
\label{sec:intro}

\def\flov{f_{\mathsf{Lov}}}

A function $f : 2^{[n]} \to \Z$ is \emph{submodular} if it has the diminishing marginal return property that $f(S \cup \{i\}) - f(S) \geq f(T \cup \{i\}) - f(T)$ for all elements $i \in [n]$ and subsets $S \subseteq T \subseteq [n] \setminus \{i\}$, i.e., the function value increase of adding $i$ to a set $S$ is at least that of adding $i$ to a superset $T$ that does not contain $i$. Submodular functions are 
used to model a range of problems arising machine learning \cite{GK05, BKNT, CK10}, operations research \cite{FG88, QS95}, and economics \cite{T98}. The \emph{submodular function minimization (SFM) problem} consists of finding the subset $S$ with the smallest $f(S)$, given evaluation oracle access to the submodular function $f$, using as few queries as possible. 
SFM has numerous applications. For example, in  natural language processing SFM has played a key role in speech analysis where \cite{LB11, JB11} modeled the task of optimally selecting terse, yet diverse, training data out of a large speech dataset as a SFM problem; in computer vision, the task of energy minimization was reduced to SFM \cite{BVZ01, KKT08, KT10, JBS13, ABJK15}. 

Seminal work of \cite{GLS81} established that SFM can be solved with a polynomial number of queries and in polynomial time. In the decades since, 
there has been extensive research characterizing the query and computational complexity of SFM. When $f$ is integer valued with values between $-M$ and $M$, the state-of-the-art query complexities are $\otilde(n M^2)$\footnote{Throughout, we use $\otilde(f(n))$ to hide $\poly(\log f(n), \log n)$ factors.} 
due to \cite{CLSW17} and \cite{ALS20}, $\otilde(n^2 \log M)$ and $\otilde(n^3)$, both due to \cite{LSW15} (the latter query complexity being improved by a factor of $\log^2 n$ in \cite{J21}), and $O(n^2 \log n)$ (with an exponential runtime) due to \cite{J22}. The largest query complexity lower bound is $\Omega(n\log n)$ for deterministic algorithms due to~\cite{CGJS}.

The algorithms underlying the above results are highly {\em sequential}. State-of-the-art SFM algorithms use at least a linear number of rounds of queries to the evaluation oracle ($\otilde(n M^2)$ rounds in \cite{ALS20} and $O(n \log n)$ rounds in \cite{J22}). Although there is a trivial $1$-round algorithm that queries all $2^n$ input of $f$ in parallel, all known polynomial time SFM algorithms use $\Omega(n)$ rounds.

With the prevalence of parallel computation in practice and applications of SFM to problems involving massive data sets, 
recently there has been an increasing number of works which study the {\em parallel} complexity of SFM, i.e., the smallest number of rounds a SFM algorithm must take. Starting with \cite{BS20}, significant progress has been made in proving \emph{lower bounds} on the parallel complexity of SFM.
It is now known that 
any SFM algorithm making polynomially many queries needs $\widetilde{\Omega}(n^{1/3})$ rounds when $M = \Theta(n)$ \cite{CCK21} and $\widetilde{\Omega}(n)$ rounds when $M = \Theta(n^n)$ \cite{CGJS}.

In contrast to the significant progress on lower bounds, less progress has been made on designing SFM algorithms with small parallel complexity.
This is due in part to the established lower bounds; as noted above, if we desire \emph{range-independent} parallel complexity bounds, i.e., independent of the range $M$, then it is impossible to obtain an $O(n^{1-\eps})$-round SFM algorithm, for any constant $\eps > 0$.
The central question motivating this paper is what non-trivial parallel speedups for SFM are possible if we allow methods with a \emph{range-dependent} parallel complexity.

\begin{center}
\vspace{-0.3cm}
    \emph{Is it possible to obtain query-efficient $o(n)$-round $M$-dependent algorithms for SFM?}
    \vspace{-0.3cm}
\end{center}

In this paper we provide positive answers to the above question. 
Our first result is an algorithm that runs in $\otilde(n^{1/3} M^{2/3})$ rounds with query complexity $\otilde(n^2 M^2)$. To achieve this result, we first provide a generic reduction from parallel SFM to parallel convex optimization (a well studied problem discussed below). While naively this approach yields a sublinear bound of $\otilde(n^{2/3} M^{2/3})$ rounds, we show how to further improve these convex optimization methods in our setting.

Our second result is a simple $2$-round SFM algorithm (\Cref{alg:n^M}) with query complexity $n^{O(M)}$. For constant $M$, 
the parallel complexity of $2$ is optimal among the class of query-efficient algorithms 
that query the minimizer \cite{CLSW17}. 
It is instructive to contrast our second result to the lower bound in \cite{CCK21}, where it is proved that when $M = n$, any algorithm with query complexity $n^{M^{1-\delta}}$ for any constant $\delta > 0$ must proceed in $n^{\Omega(\delta)}$ rounds.

\noindent \textbf{Highly Parallel Convex Optimization.} 
Motivated by applications to distributed and large scale optimization \cite{boyd2011distributed, GR18}, the question of highly parallel convex optimization has received significant attention in the last decade. 
Formally, the task is to find an (approximate) minimizer of a convex function this is Lipschitz in some norm, using as few rounds of $O(\poly(n))$ parallel queries to a subgradient oracle as possible. 
Over the past few years, there has been progress on both upper \cite{DBW12, BBSS, BJL+19} and lower bounds \cite{N94, BS18, DG19} for this problem.

This line of work is particularly relevant as SFM reduces to minimizing the \emph{Lov\'asz extension} (see \Cref{fact:lovasz_ext}) \cite{L83} over $[0,1]^n$, which is convex and 
$O(M)$-Lipschitz in $\ell_\infty$ (see \Cref{def:l-inf}). 
In \Cref{sec:from_linf_to_submod} we provide a straightforward reduction from this problem to unconstrained minimization of a $O(L)$-Lipschitz function in $\ell_\infty$ where the minimizer has $\ell_\infty$ norm $O(1)$. Consequently, improved parallel $\ell_\infty$-Lipschitz convex optimization algorithms can imply
improved parallel SFM algorithms.

Many prior algorithms for highly parallel convex optimization focused on convex functions that are $\ell_2$-Lipschitz and some were written only for unconstrained optimization problems.
Naively using these algorithms for $\ell_\infty$ yields parallel complexities of $\otilde(n^{3/4} / \eps)$~\cite{DBW12} and 
$\otilde(n^{2/3} / \eps^{2/3})$~\cite{BJL+19,CJJ+23}\footnote{\cite{BJL+19,CJJ+23} study the $\ell_2$ setting. To obtain the corresponding $\otilde(n^{2/3} / \eps^{2/3})$ result for $\ell_\infty$, we combine their result with \Cref{lem:constrained-reduce} given in this paper.} for convex functions that are $\ell_\infty$-Lipschitz. 

Our first SFM result follows from an improved algorithm that we develop that finds $\eps$-approximate minimizers for convex $\ell_\infty$-Lipschitz functions in $\otilde(n^{1/3}/\eps^{2/3})$ rounds. 
Interestingly, for constant $\eps > 0$, the dependence on $n$ in this improved parallel complexity is optimal up to logarithmic factors;~\cite{N94} proved a lower bound of $\widetilde{\Omega}(n^{1/3}\ln(1/\eps))$ on the round complexity 
of any algorithm obtaining an $\eps$-approximate minimizer over $[0,1]^n$ for $\ell_\infty$-Lipschitz convex functions. 
For constant $\eps > 0$, this lower bound also applies to unconstrained $\ell_\infty$-optimization \cite{DG19}.

\begin{table}[htp!]
\begin{minipage}[c]{0.5\textwidth}
\centering
\begin{tabular}{|c|c|c|c|}
\hline 
\textbf{Paper} & \textbf{Year} & \textbf{Parallel Rounds} \tabularnewline
\hline 
\cite{ALS20} & 2020 & $\otilde(n M^2)$ \\
\hline
\cite{JLSW20} & 2020 & $\otilde(n \log M)$\\
\hline
\cite{J22} & 2021 & $\otilde(n)$ \\
\hline
\textbf{This paper} & 2023 & $\otilde(n^{1/3} M^{2/3})$ \\
\hline
\end{tabular}
\end{minipage}
\begin{minipage}[c]{0.5\textwidth}
\centering
\begin{tabular}{|c|c|c|c|}
\hline 
\textbf{Paper} & \textbf{Year} & \textbf{Parallel Rounds} \tabularnewline
\hline 
Subgradient Descent & 1960s & $O(n/\eps^2)$ \\
\hline
Cutting Plane & 1965 & $\otilde(n \log (1/\eps))$ \\
\hline
\cite{BJL+19,CJJ+23}* & 2019 & $\otilde(n^{2/3}/\eps^{2/3}
)$ \\
\hline
\textbf{This paper} & 2023 & $\otilde(n^{1/3}/ \eps^{2/3})$ \\
\hline
\end{tabular}
\end{minipage}
\smallskip
\caption{\label{table:ell_infty_parallel} \small State-of-the-art parallel complexity for SFM and $\ell_\infty$-optimization. See \Cref{subsec:related_work} for references on cutting plane methods. *The $\otilde(n^{2/3} / \eps^{2/3})$ result uses \Cref{lem:constrained-reduce} from this paper.
} 
\end{table}

\noindent \textbf{Notation.} We let $[n]\defeq \{1,\ldots,n\}$ for any $n \in \Z_{> 0}$. We let $\mathbf{I}_n$ denote the identity matrix in $\R^{n \times n}$.

\subsection{Problems, Results, and Approach}
\label{subsec:approach}

Here we formally define the problems we consider, present our results, and discuss some of the key insights of our approach. This section is organized as follows. We begin by presenting the parallel computation model. Then, we introduce the parallel $\ell_p$-Lipschitz convex optimization problem, which is closely tied to parallel SFM. We present our improvement for the $\ell_{\infty}$-Lipschitz setting and offer a brief overview of our techniques in obtaining this result. Finally, we formally introduce the SFM setup, along with the new results we obtain (\Cref{thm:cor-sub} and \Cref{lem:corr_simple_alg}).

\noindent \textbf{Parallel Complexity Model.} 
We consider the standard black-box query model for optimization, where 
we are given a convex function $f: \mathcal{D} \rightarrow \mathbb{R}$, with domain $\mathcal{D} \subset \mathbb{R}^n$, accessed through an oracle.  
Parallel algorithms for minimizing $f$ 
proceed in rounds, where in each round the algorithm can submit a set of queries to the oracle in parallel.
The {\em parallel complexity} of an algorithm is the total number of rounds it uses and it captures how ``sequential'' the algorithm is. 
Additionally, we consider the {\em query complexity} of an algorithm which is the total number of queries it makes.

\noindent \textbf{Parallel $\ell_p$-Lipschitz Convex Optimization.} 
In \Cref{sec:ellinf_SFM} we consider 
the problem of minimizing a convex function $f: \mathbb{R}^n \rightarrow \mathbb{R}$ given access to a {\em subgradient oracle} $g: \mathbb{R}^n \rightarrow \mathbb{R}^n$, which when queried with point $x \in \mathbb{R}^n$ returns a vector $g(x)$ that is a subgradient, denote $g(x) \in \partial(f(x))$ where $\partial(f(x))$ is the set of all subgradients of $f$ at $x$, i.e., $v \in \partial f(x)$ if and only if $f(y) \ge f(x) + v^\top(y - x)$ for all $y \in \mathbb{R}^n$. 
Furthermore, we assume that $f$ is $L$-Lipschitz with respect to a norm $\norm{\cdot}$.

\begin{definition}[$L$-Lipschitzness with respect to a given norm]
    We say that $f: \R^n \to \R$ is \emph{$L$-Lipschitz with respect to $\norm{\cdot}$} for norm $\norm{\cdot}$ 
    if $|f(x) - f(y)| \le L \norm{x-y}$ for all $x, y \in \R^n$.
    If $f$ is $O(1)$-Lipschitz with respect to a norm $\norm{\cdot}$, we say that $f$ is \emph{$\norm{\cdot}$-Lipschitz}.
    When the norm is $\norm{\cdot}_p$ we alternatively say that $f$ is \emph{$L$-Lipschitz in $\ell_p$} and that $f$ is \emph{$\ell_p$-Lipschitz} respectively.
\end{definition}

There is a broad line of work on studying the parallel complexity of \emph{$\ell_p$-Lipschitz convex optimization} in which the goal is to efficiently compute an $\eps$-approximate minimizer (i.e., a point $y$ with $f(y) \leq \inf_x f(x)$) of a $\ell_p$-Lipschitz convex function where the minimizer either has $\ell_p$-norm $O(1)$ or the problem is constrained to the $\ell_p$-norm ball of of radius $O(1)$. The case when $p = 2$ is perhaps the most well studied and our new result regarding $\ell_{\infty}$-Lipschitz convex optimization builds upon a result in \cite{CJJ+23}, Theorem 1, which considers this setting. The statement of our result is below:

\begin{restatable}[Parallel Convex Optimization in $\ell_\infty$]{theorem}{ellinfstd}\label{thm:l-inf-std}
There is an algorithm that when given a subgradient oracle for convex $f: \R^n \to \R$ that is $1$-Lipschitz in $\ell_\infty$ and has a minimizer $x_*$ with $\norm{x_*}_\infty \leq 1$ computes an $\eps$-approximate minimizer of $f$ in $\Ot(n^{1/3} \eps^{-2/3})$ rounds and $\otilde(n \eps^{-2})$ queries.
\end{restatable}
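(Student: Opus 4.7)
The goal is $\tilde O(n^{1/3}\eps^{-2/3})$ rounds, which beats the naive $\tilde O(n^{2/3}\eps^{-2/3})$ one gets by treating the problem as $\ell_2$-Lipschitz and invoking \cite{CJJ+23} directly (since $\|x_*\|_2\le\sqrt n$ for $\|x_*\|_\infty\le 1$). My plan is to first use the constrained-to-unconstrained reduction \Cref{lem:constrained-reduce} so that it suffices to minimize $f$ over the box $[-1,1]^n$, and then to open up the \cite{CJJ+23} framework rather than use it as a black box, re-instantiating it in $\ell_\infty$-adapted geometry so that the ``$LR$'' factor in the rate becomes $\tilde O(1)$ rather than $\sqrt n$.

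Concretely, \cite{CJJ+23} is a Monteiro--Svaiter-style accelerated scheme whose outer loop calls an approximate ball oracle for $\min_{\|x-y\|_2\le r}\bigl\{f(x)+\tfrac{1}{2r}\|x-y\|_2^2\bigr\}$ on small radii $r$, implemented in $\tilde O(1)$ rounds and $\poly(n)$ queries. I would replace the Euclidean regularizer with the mirror map $\omega(x)=\tfrac{1}{2}\|x\|_q^2$ for $q=1+1/\log n$, which is $\Omega(1/\log n)$-strongly convex with respect to $\|\cdot\|_q$ and satisfies $\|\cdot\|_q=\Theta(\|\cdot\|_\infty)$ in dimension $n$. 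This is the standard accelerated mirror-descent geometry for $\ell_\infty$/$\ell_1$ duality, and for a function that is $1$-Lipschitz in $\ell_\infty$ with $\|x_*\|_\infty\le 1$ both the effective Lipschitz constant (the $\ell_1$ norm of subgradients) and the effective radius are $O(1)$ up to $\log n$ factors. Setting the ball radius to $r=\Theta(n^{-1/3}\eps^{1/3})$ as in \cite{CJJ+23} then produces $\tilde O(n^{1/3}\eps^{-2/3})$ outer iterations.

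The main obstacle will be implementing the inner ball oracle in $\tilde O(1)$ rounds and $\poly(n)$ queries without reintroducing a $\sqrt n$ factor. The key hypothesis used by the \cite{CJJ+23} ball oracle -- that $f$ varies by only $O(r)$ on a ball of radius $r$ -- holds automatically in our setting for $\|\cdot\|_q$-balls, since $f$ is $\ell_\infty$-Lipschitz and $\|\cdot\|_q$ is comparable to $\|\cdot\|_\infty$. One then solves the regularized subproblem $\min_x\{f(x)+\tfrac{1}{2r}\omega(x-y)\}$ by an inner accelerated mirror-descent using $\poly(n)$ parallel subgradient queries per inner step, leaning on the strong convexity of $\omega$ and the $O(r)$ variation of $f$ to cap the inner iteration count at $\polylog(n)$. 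The delicate piece is to verify that this inner routine meets the MS-oracle error tolerance required by the outer accelerated loop while contributing only polylogarithmic factors to the final round count; once this is in place, the $\tilde O(n\eps^{-2})$ query bound follows by multiplying the $\tilde O(n^{1/3}\eps^{-2/3})$ outer rounds by the per-round query count of $\tilde O(n^{2/3}\eps^{-4/3})$ needed to solve each ball subproblem to the requisite accuracy.
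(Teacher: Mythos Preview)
Your proposal contains a fundamental geometric error. The mirror map $\omega(x)=\tfrac{1}{2}\|x\|_q^2$ with $q=1+1/\log n$ is the standard tool for the \emph{$\ell_1$-ball} setting (domain bounded in $\ell_1$, gradients bounded in $\ell_\infty$), not for the $\ell_\infty$-ball setting we are in. For $q$ close to $1$, $\|\cdot\|_q$ is within constant factors of $\|\cdot\|_1$, not $\|\cdot\|_\infty$; your assertion that ``$\|\cdot\|_q=\Theta(\|\cdot\|_\infty)$ in dimension $n$'' is simply false. Concretely, the all-ones vector lies in the unit $\ell_\infty$-box and has $\|x\|_q=n^{1/q}$, so the Bregman diameter $\sup_{\|x\|_\infty\le 1}\omega(x)=\tfrac{1}{2}n^{2/q}\approx \tfrac{1}{2}n^2$ --- \emph{worse} than the Euclidean diameter $n$. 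Hence the ``effective radius'' in your accelerated mirror-descent accounting is not $\tilde O(1)$ but polynomially large, and the scheme yields no improvement over the naive $\tilde O(n^{2/3}\eps^{-2/3})$ bound. More generally, there is no mirror map that is $\tilde\Omega(1)$-strongly convex with respect to $\|\cdot\|_\infty$ and has $\tilde O(1)$ diameter on the $\ell_\infty$-ball; this is exactly what distinguishes the $\ell_\infty$ geometry from the $\ell_1$ geometry and why a direct mirror-map swap cannot work here.

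The paper's route is orthogonal to yours: it \emph{keeps} the Euclidean geometry of \cite{CJJ+23} (so $R=\|x_*\|_2\le\sqrt n$ and $\sigma\le\|g\|_1\le 1$) and exploits $\ell_\infty$-Lipschitzness only in the \emph{smoothing} step. The key observation (\Cref{lem:gauss}) is that convolving with $\mathcal N(0,\rho^2\mathbf I_n)$ perturbs an $L$-Lipschitz-in-$\ell_\infty$ function by at most $L\rho\sqrt{2\log n}$, since $\E_{\xi\sim\mathcal N(0,\rho^2\mathbf I_n)}\|\xi\|_\infty=O(\rho\sqrt{\log n})$, whereas for $\ell_2$-Lipschitz functions the perturbation is $O(L\rho\sqrt n)$. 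This permits the smoothing parameter --- and hence the ball-oracle radius $r$ in the ball-acceleration framework --- to be $\rho=\Theta(\eps/\sqrt{\log n})$ rather than $\Theta(\eps/\sqrt n)$. The outer-iteration count then becomes $K=(R/r)^{2/3}=\tilde O\bigl((\sqrt n/\eps)^{2/3}\bigr)=\tilde O(n^{1/3}\eps^{-2/3})$, and summing the per-call cost of the ball oracle from \Cref{lem:parallel-sgd} gives $\tilde O((\sigma R/\eps)^2)=\tilde O(n\eps^{-2})$ total queries. The reduction of \Cref{lem:constrained-reduce} is used, as you anticipated, but only to keep the minimizer of the smoothed function inside an $\ell_2$-ball of radius $O(R)$.
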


In fact, we obtain a more general result that solves stochastic variants of  parallel $\ell_\infty$-optimization (see \Cref{thm:l-inf}) and \Cref{thm:l-inf-std} is an important corollary of this more general result. Furthermore, we also obtain a version of \Cref{thm:l-inf-std} (\Cref{cor:constrained_l_inf}) for the setting of constrained minimization (i.e., when we restrict our domain to the unit box $[0, 1]^n$),
with the same bounds on depth and query complexity as a corollary of \Cref{thm:l-inf}.

Our parallel convex optimization algorithms build on machinery developed for highly-parallel algorithms for minimizing convex functions that are $\ell_2$-Lipschitz. These methods consider a convolution of $f$ with a centered Gaussian with covariance $\rho^2 \mathbf{I}_n$ (also referred to as \textit{Gaussian smoothing}), and then apply optimization methods~\cite{CJJ+23,ACJ+21,BJL+19} to this smooth function. By leveraging the properties of this smoothing and the convergence rate of the associated optimization methods, they obtain their parallel complexities. 
Since functions which are $\ell_\infty$-Lipschitz are also $\ell_2$-Lipschitz, the above algorithms also apply to $\ell_\infty$-Lipschitz functions but they give us a suboptimal dependence of $n^{2/3}$ on the dimension. We improve the dependence on dimension by utilizing the $\ell_\infty$-Lipschitzness of our function to add more Gaussian smoothing. This allows us to obtain a $n^{1/3}$ dependence on the dimension, which is optimal up to polylogarithmic factors for constant $\eps$ \cite{N94,DG19}. The key observation is that convolving an $\ell_\infty$-Lipschitz function with a Gaussian of covariance $\rho^2 \mathbf{I}_n$ changes the function value by no more than $O(\rho \sqrt{\log n})$ (see \Cref{lem:gauss}), whereas
for $\ell_{2}$-Lipschitz functions it could change the function value by $O(\rho \sqrt{n})$.

\noindent \textbf{Submodular Function Minimization.} 
In SFM, we assume the submodular function $f: 2^{[n]} \rightarrow \mathbb{Z}$ is given by an {\em evaluation oracle}, which when queried with $S \subseteq [n]$, returns the value of $f(S)$. Throughout the paper, we assume that $f(S) \in [-M, M]$ for some $M \in \Z_{> 0}$, and $f(\emptyset) = 0$. The assumption that $f(\emptyset) = 0$ can be made without loss of generality by instead minimizing $\tilde{f}(S) \defeq f(S) - f(\emptyset)$; this transformation can be implemented with one query and moves the range of $f$ by at most $\pm M$, turning any dependence on $M$ in an algorithms' complexity to $2M$. 

Note that this submodular function minimization setup is different from the setup of parallel convex optimization, as $f$ only defined on the vertices of the unit hypercube. Nonetheless, it is known that there is a convex function $\flov$ defined on $[0,1]^n$, known as the Lovasz Extension, such that optimizing $\flov$ suffices for optimizing $f$. Additionally, it is known how to compute a subgradient of $\flov$ at any point $x$ in 1 round using at most $n$ evaluation queries to $f$ (as highlighted in \Cref{fact:lovasz_ext}).

Now, we are ready to present our first result on parallel SFM. Later, in \Cref{sec:from_linf_to_submod}, we provide a more general version of this theorem, \Cref{cor:approx_submodular}, which gives improved parallel complexities for approximately minimizing bounded, real-valued submodular functions.

\begin{restatable}[Sublinear Parallel SFM]{theorem}{sublinearSFM}\label{thm:cor-sub}
There is an algorithm that, when given an evaluation oracle for submodular $f: 2^{[n]} \rightarrow \mathbb{Z}$ with $f(\emptyset) = 0$ and $|f(S)| \leq M$ for all $S \subseteq [n]$, finds a minimizer of $f$ in $\Ot(n^{1/3}M^{2/3})$ rounds and $\Ot(n^2M^2)$ queries.
\end{restatable}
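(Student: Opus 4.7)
The plan is to reduce SFM to constrained parallel $\ell_\infty$-Lipschitz convex optimization via the Lovász extension, and then invoke \Cref{thm:l-inf-std} (through its constrained variant) at an accuracy that exploits integrality of $f$ to enable exact rounding. By \Cref{fact:lovasz_ext}, the Lovász extension $\flov : [0,1]^n \to \R$ is convex, satisfies $\min_{S \subseteq [n]} f(S) = \min_{x \in [0,1]^n} \flov(x)$, is $O(M)$-Lipschitz in $\ell_\infty$, and admits a subgradient oracle computable in one round using $n$ evaluation queries of $f$. Moreover, threshold rounding of any $x \in [0,1]^n$ produces, in one additional round of $n$ evaluation queries, a set $S$ with $f(S) \leq \flov(x)$. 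Because $f$ is integer-valued, any point $x$ with $\flov(x) < \min_T f(T) + 1$ rounds to an exact minimizer of $f$.

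Concretely, I would rescale by setting $\tilde f(x) \defeq \flov(x)/(cM)$ for an absolute constant $c$ chosen so that $\tilde f$ is $1$-Lipschitz in $\ell_\infty$. The minimizer of $\tilde f$ over $[0,1]^n$ lies in $[0,1]^n$ and thus has $\ell_\infty$ norm at most $1$. I would then apply the constrained variant of \Cref{thm:l-inf-std} (namely \Cref{cor:constrained_l_inf}, obtained via \Cref{lem:constrained-reduce}) to $\tilde f$ with target accuracy $\eps' \defeq 1/(2cM)$. This yields a point $\hat x \in [0,1]^n$ with $\flov(\hat x) \leq \min_{x \in [0,1]^n} \flov(x) + 1/2$ in $\Ot(n^{1/3} (\eps')^{-2/3}) = \Ot(n^{1/3} M^{2/3})$ rounds and $\Ot(n (\eps')^{-2}) = \Ot(n M^2)$ subgradient queries to $\tilde f$. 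Since each subgradient query to $\tilde f$ costs $n$ evaluation queries of $f$, the total query complexity is $\Ot(n^2 M^2)$, matching the theorem.

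The final step is to threshold round $\hat x$ using \Cref{fact:lovasz_ext}, obtaining a set $S$ with $f(S) \leq \flov(\hat x) \leq \min_T f(T) + 1/2$; integrality of $f$ then forces $f(S) = \min_T f(T)$, giving an exact minimizer. This step adds only one further round and $n$ additional queries, which are absorbed by the bounds above. The main potential obstacle is aligning the constrained geometry of SFM (minimization over the cube $[0,1]^n$) with the unconstrained statement of \Cref{thm:l-inf-std}; this is resolved cleanly by the constrained reduction \Cref{lem:constrained-reduce}. A secondary delicate point is the calibration $\eps' = \Theta(1/M)$, which is the largest accuracy compatible with exact integer rounding and the smallest accuracy consistent with the target $M^{2/3}$ dependence in rounds.
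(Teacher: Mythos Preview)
Your proposal is correct and matches the paper's approach; in fact the paper gives two proofs, one applying \Cref{lem:constrained-reduce} directly to $\flov$ over $[0,1]^n$ and one applying \Cref{cor:constrained_l_inf} after an affine change of variables, and your argument sits between these. One small slip worth noting: \Cref{cor:constrained_l_inf} as stated minimizes over $\{x:\norm{x}_\infty\le 1\}=[-1,1]^n$, not $[0,1]^n$, so without a translation the output $\hat{x}$ need not lie in $[0,1]^n$ and the rounding step of \Cref{fact:lovasz_ext} does not directly apply; the fix is exactly the one you already allude to, namely invoking \Cref{lem:constrained-reduce} with center $c=\tfrac12\mathbf{1}$ and radius $r=\tfrac12$ (as in the paper's primary proof) or precomposing with the affine map $x\mapsto \tfrac12 x + \tfrac12\mathbf{1}$ (as in the paper's alternative proof).
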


As discussed, \Cref{thm:cor-sub}, is obtained by using and enhancing tools for optimizing Lipschitz convex functions with a subgradient oracle. We in fact prove a more general result, namely that $\otilde(n^{1/3}/\eps^{2/3})$ rounds and $\otilde(n^2 / \eps^2)$ queries are sufficient to find an $\eps M$-approximate minimizer (\Cref{cor:approx_submodular}). Since the function is integer valued, approximating the scaled Lov\'asz extension to $\eps \approx \Theta(1/M)$ gives the exact minimizer to the submodular function. 
Our proof of \Cref{cor:approx_submodular} follows from our new result on $\ell_\infty$-convex optimization (\Cref{thm:l-inf-std}). By applying it to a scaled version of the Lov\'asz extension; it is known 
that if $f$ is a submodular function with range $\Z \cap [-M,+M]$, then the Lov\'asz extension scaled by $O(1/M)$ is a convex function which is $\ell_\infty$-Lipschitz. 
However, it is important to note that SFM is only equivalent to {\em constrained} minimization of the Lov\'asz  extension in $[0,1]^n$, while \Cref{thm:l-inf-std} below is {\em unconstrained} (e.g. applies for minimizing over $\R^n$). To apply \Cref{thm:l-inf-std} in the context of SFM, we give a general reduction from constrained to unconstrained optimization by adding a regularizer that restricts the minimizer of the regularized function to the constrainted set (see \Cref{lem:constrained-reduce}). This reduction is  a generic technique and might be of independent utility. This same technique is what enables our aforemention constrained variant of \Cref{thm:l-inf-std}, \Cref{cor:constrained_l_inf}, which yields an alternative proof of \Cref{thm:cor-sub}.

\noindent \textbf{Parallel SFM in Two Rounds.}
Our second SFM result is a simple combinatorial $2$-round algorithm which is efficient for functions of constant range. 

\begin{restatable}[Two-round Parallel SFM]{theorem}{tworoundSFM}\label{lem:corr_simple_alg}
There is an algorithm (\Cref{alg:n^M}) that when given an evaluation oracle for submodular $f: 2^{[n]} \rightarrow \mathbb{Z}$ with $f(\emptyset) = 0$  and $|f(S)| \leq M$ for all $S \subseteq [n]$ finds a minimizer of $f$ in 2 rounds and $O(n^{M+1})$ queries.
\end{restatable}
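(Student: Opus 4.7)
The plan is to analyze a 2-round algorithm whose correctness follows from a structural lemma exploiting the integer-bounded range of $f$.

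In round 1, query $f(T)$ for every $T \subseteq [n]$ with $|T| \leq M+1$. This uses $O(n^{M+1})$ queries and gives access to both $f(T)$ and $f(T \cup \{i\})$ for every $T$ of size at most $M$ and every $i \in [n] \setminus T$. In round 2, for each such $T$ compute the set $X(T) := \{i \in [n] \setminus T : f(T \cup \{i\}) \leq f(T)\}$ from the round-1 answers, form $S_T := T \cup X(T)$, and query $f(S_T)$. These $O(n^M)$ additional adaptive queries leave the total at $O(n^{M+1})$. Output any $S_T$ achieving the smallest observed $f(S_T)$.

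The correctness reduces to the following structural claim: there exists $T^* \subseteq [n]$ with $|T^*| \leq M$ such that $f(S_{T^*}) = \min_S f(S)$. To prove this, let $S^*$ be the inclusion-minimum minimizer of $f$ and construct $T^* \subseteq S^*$ greedily. Starting from $T^* = \emptyset$, while there exists $i \in S^* \setminus T^*$ with $d_i(T^*) := f(T^* \cup \{i\}) - f(T^*) > 0$, add such an $i$ to $T^*$. Integrality of $f$ means each added element has $d_i(T^*) \geq 1$, so $f(T^*)$ increases by at least $1$ per step; together with $f(\emptyset) = 0$ and $f(T^*) \leq M$ this forces termination within $M$ steps, so $|T^*| \leq M$. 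At termination, every $i \in S^* \setminus T^*$ has $d_i(T^*) \leq 0$, so $S^* \setminus T^* \subseteq X(T^*)$ and hence $S^* \subseteq S_{T^*}$.

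It remains to show $f(S_{T^*}) = f(S^*)$. Setting $Y := X(T^*) \setminus S^*$, we have $S_{T^*} = S^* \cup Y$ since $T^* \subseteq S^*$. Fix $i \in Y$, so $i \notin T^* \cup S^*$: submodularity combined with $T^* \subseteq S^*$ yields $d_i(S^*) \leq d_i(T^*) \leq 0$, while minimality of $S^*$ yields $d_i(S^*) \geq 0$, so $d_i(S^*) = 0$. Analogously, for any $Y' \subseteq Y$ and any $j \in Y \setminus Y'$, submodularity gives $d_j(S^* \cup Y') \leq d_j(S^*) = 0$, while the inductive hypothesis $f(S^* \cup Y') = f(S^*)$ combined with minimality of $S^*$ gives $d_j(S^* \cup Y') \geq 0$. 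By induction on $|Y'|$ we obtain $f(S^* \cup Y) = f(S^*)$, so $f(S_{T^*}) = f(S^*) = \min_S f(S)$. The main subtlety is the structural claim itself; its decisive use of the bounded integer range is the size bound $|T^*| \leq M$, obtained by charging each greedy step to a unit integer increase of the bounded quantity $f(T^*)$.
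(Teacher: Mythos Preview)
Your proof is correct and shares the paper's core idea: greedily build a small anchor $T$ inside a chosen minimizer by adding elements with strictly positive marginal, use integrality of $f$ to bound $|T| \leq M$, and then argue that the augmentation $A(T)$ has minimum value. The one substantive difference is which minimizer you anchor in. The paper takes $S_*$ to be the \emph{maximal} minimizer, which lets it finish in one line: for any $j \notin S_*$, maximality gives $f(S_* \cup \{j\}) > f(S_*)$, and submodularity (with $T \subseteq S_*$) then forces $f(T \cup \{j\}) > f(T)$, so $A(T) = S_*$ exactly. You instead declare $S^*$ to be the inclusion-minimum minimizer---though you never actually use inclusion-minimality, only that $S^*$ is a minimizer---and must therefore carry out the inductive argument over $Y' \subseteq Y$ to show the extra elements in $S_{T^*} \setminus S^*$ don't change the value. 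Both routes work; the paper's choice of the maximal minimizer buys a shorter argument, while yours (once the unused ``inclusion-minimum'' hypothesis is dropped) shows that any minimizer can serve as the target, at the cost of the extra induction.
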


The algorithm proving \Cref{lem:corr_simple_alg} relies on two key observations. First, if $S_*$ is the minimizer of maximum size, for every subset $T \subseteq S_*$ and $i \in [n]$ with $f(T \cup \{i\}) \le f(T)$, we have $i \in S_*$. In other words, 
every element with a non-positive marginal at a subset $T \subseteq S^*$ is also contained in $S_*$. 
This leads to the idea of \textit{augmenting} a set $T$ by the set $T' = T \cup \{i: f(T \cup \{i\}) \le f(T)\}$. 
Secondly, every subgradient $g \in \partial f(x)$ has at most $M$ entries that are strictly positive. This ensures that there exists an $M$-sparse subset $T \subseteq S_*$ with the property that $f(T \cup \{i\}) \le f(T)$ forall $i \in S_* \backslash T$. Consequently, our algorithm proceeds by augmenting all $M$-sparse sets, as it is guaranteed that one of these augmented sets is the maximal minimizer (see \Cref{sec:depth2} for more details). 

\subsection{Related Work}
\label{subsec:related_work}

\noindent \textbf{Parallel Convex Optimization.} As mentioned in \Cref{subsec:approach}, there are a number of works studying the parallel complexity of \emph{$\ell_p$-Lipschitz convex optimization}. Perhaps, the most well-studied case is that of $p =2$. In this case, the classic subgradient descent algorithm achieves a parallel complexity of $O(\eps^{-2})$-rounds and the standard cutting-plane methods achieve a parallel complexity of $O(n \log (1/\eps))$-rounds. \cite{DBW12,BJL+19,CJJ+23} improved upon this rate, achieving parallel complexities of  $\otilde(n^{1/4} \eps^{-1})$ \cite{DBW12} and $\otilde(n^{1/3} \epsilon^{-2/3})$ \cite{BJL+19,CJJ+23} respectively. The implications of these results for the $p=\infty$ case, which is the object of study in our paper, were discussed earlier and we are unaware of works on alternative upper bounds for $p=\infty$.

In terms of lower bounds, the $p=\infty$ case was studied in the prescient paper of~\cite{N94} which obtains a $\widetilde{\Omega}(n^{1/3}\ln(1/\eps))$ lower bound 
for minimizing $\ell_\infty$-Lipschitz functions over the $\ell_\infty$-ball (see also~\cite{DG19}). When $\eps$ is a constant, our upper bound matches this lower bound, though 
our dependence on $\eps$ is polynomial instead of logarithmic. For the $p=2$ case, \cite{N94,BS18} 
proved a tight lower bound of $\Omega(1/\eps^2)$ on the parallel complexity when $1/\eps^2 \leq \otilde(n^{1/3})$, i.e., the parallel complexity of subgradient descent is optimal up to $\otilde(n^{1/3})$ rounds of queries. This was later improved by \cite{BJL+19}, which showed that subgradient descent is optimal up to $\otilde(n^{1/2})$ rounds. 
\cite{DG19} considered the general $p$ case (and other non-Euclidean settings) and proved a lower bound of $\Omega(\eps^{-p})$ lower bound on the parallel complexity for $2 \leq p < \infty$, $\Omega(\eps^{-2})$ for $1 < p \leq 2$, and $\Omega(\eps^{-2/3})$ for $p=1$. This paper also has lower bounds for smooth convex functions.

\noindent \textbf{Submodular Function Minimization.}
We now expand on the history of SFM upper and lower bounds for parallel and sequential algorithms touched upon earlier. 
Since the seminal work of Edmonds in 1970 \cite{E70}, there has been extensive work \cite{GLS81,GLS88, S00, IFF01, I03, FI00, V03, O09, IO09, LSW15, CLSW17, ALS20,DVZ21, J21, J22} on developing query-efficient algorithms for SFM.  
\cite{GLS81,GLS88} gave the first 
polynomial time algorithms using the ellipsoid method. 
The state-of-the-art SFM algorithms include a $\otilde(n^2)$-query exponential time algorithm due to \cite{J22}, $\otilde(n^3)$-query polynomial time algorithms due to \cite{J21,JLSZ23}; 
a $\otilde(n^2\log M)$-query polynomial time algorithm due to~\cite{LSW15},  
and a $\otilde(nM^2)$-query polynomial time algorithm due to~\cite{ALS20}. Despite these algorithmic improvements, limited progress has been made on lower bounding the query complexity of SFM and the best known lower bound has been $\Omega(n)$ for decades  \cite{H08,CLSW17,GPRW20}. Very recently, \cite{CGJS} proved an $\Omega(n\log n)$-lower bound for deterministic SFM algorithms. 

All the algorithms above are highly sequential and proceed in at least $n$ rounds. 
The question of parallel complexity for SFM was first studied in~\cite{BS20} where an $\Omega(\log n / \log \log n)$-lower bound on the number of rounds required by any query-efficient SFM algorithm was given. The range $M$ in their construction is $M = n^{\Theta(n)}$.
Subsequently, \cite{CCK21} proved a $\widetilde{\Omega}(n^{1/3})$ lower bound on the round-complexity and the range is $M = n$ for their functions.  Recently, \cite{CGJS} described a $\widetilde{\Omega}(n)$ lower bound with functions of range $M  = n^{\Theta(n)}$.

\noindent \textbf{Cutting plane methods.} Cutting plane methods are a class of optimization methods that minimize a convex function by iteratively refining a convex set containing the minimizer. Since the center of gravity method was developed independently in \cite{L65, N65}, there have been many developments of faster cutting plane methods over the last six decades \cite{S77,YN76,K80,KTE88,NN89,V89,BV04,LSW15}, with the state-of-the-art due to \cite{JLSW20}.

\section{Minimizing $\ell_\infty$-Lipschitz Functions and Submodular Functions}
\label{sec:ellinf_SFM}

\def\xstar{x^{\star}}
\def\xstarlocal{x^{\star}_{\mathsf{loc}}}
\def\BA{\textsf{BallAccel}}
\def\freg{f_{\mathsf{reg}}}
\def\cO{\mathcal{O}_\mathsf{bo}}

In this section we provide a new, improved parallel algorithm for convex optimization in $\ell_\infty$ and show how to use these algorithms to obtain an improved parallel algorithm for SFM. In much of this section, we consider the following optimization problem which we term \emph{stochastic convex optimization in $\ell_\infty$}. As we discussed, this problem generalizes parallel convex optimization in $\ell_\infty$. The problem is more general in terms of the norms it considers and how it allows for stochastic gradients; we consider it as it could be useful more broadly and as it perhaps more tightly captures the performance of our optimization algorithm.

\begin{definition}[Stochastic Convex Optimization in $\ell_\infty$]
\label{def:l-inf}
In the \emph{stochastic convex optimization in $\ell_\infty$ problem} we have 
a (stochastic)
subgradient oracle $g:\R^n \to \R^n$ such that $\E g(x) \in \partial f(x)$ and $\E \norm{g(x)}^2_2 \leq \sigma^2$ for a convex function $f: \R^n \to \R$ that is $L$-Lipschitz in $\ell_\infty$. Given the guarantee that $f$ has a minimizer $x_* \in \R^n$ with $\norm{x_*}_2 \leq R$ our goal is to compute an $\eps$-approximate minimizer of $f$, i.e., $x \in \R^n$ with $f(x) \leq f(x_*) + \eps$. 
\end{definition}

Note that in \Cref{def:l-inf}, $\ell_\infty$ appears only to determine the norm in which $f$ is Lipschitz. However, the bound on $x_*$ in $\ell_2$ that can be easily converted to one in terms of $\ell_\infty$ by using that $\norm{x_*}_2 \leq \sqrt{n} \norm{x_*}_\infty$. Furthermore, a convex function $f : \R^n \rightarrow \R$ is $L$-Lipschitz in $\ell_\infty$ if and only if $\norm{g}_1 \leq L$ for all $g \in \partial(x)$ for $x \in \R^n$. Since $\norm{g}_2 \leq \norm{g}_1$ we see that this stochastic convex optimization problem subsumes the (non-stochastic) problem of computing an $\eps$-approximate minimizer to a convex function that is $L$-Lipschitz in $\ell_\infty$ given a (deterministic) subgradient oracle. We define this more general problem as, interestingly, our algorithm tolerates this weaker stochastic oracle without any loss (as we discussed).

Our main result regarding stochastic convex optimization in $\ell_\infty$ is given in the following theorem.

\begin{restatable}[Stochastic Convex Optimization in $\ell_\infty$]{theorem}{ellinf}
\label{thm:l-inf}
There is an algorithm that solves the stochastic convex optimization problem (\Cref{def:l-inf}) in $\ell_\infty$ (\Cref{def:l-inf}) in $\Ot((LR/\eps)^{2/3})$ rounds and $\Ot((\sigma R / \eps)^2)$ queries. 
\end{restatable}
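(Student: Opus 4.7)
The plan is to follow the ball-acceleration template of CJJ+23 for $\ell_2$-Lipschitz stochastic convex optimization, but run it on a Gaussian smoothing $f_\rho$ of $f$ whose radius $\rho$ is chosen much larger than the $\ell_2$ analysis would allow. Concretely, define $f_\rho(x) \defeq \E_{\zeta \sim N(0,\rho^2 \mathbf{I}_n)}[f(x+\zeta)]$. Then $f_\rho$ is convex, and differentiating under the Gaussian density shows that $g(x+\zeta)$ (with fresh $\zeta$) is an unbiased stochastic subgradient of $f_\rho$ with second moment at most $\sigma^2$. The one-line gain over the $\ell_2$ setting is \Cref{lem:gauss}: because $f$ is $L$-Lipschitz in $\ell_\infty$,
\[
|f_\rho(x) - f(x)| \;\leq\; L \cdot \E_\zeta \norm{\zeta}_\infty \;=\; O(L\rho\sqrt{\log n}),
\]
in place of the $O(L\rho\sqrt{n})$ bound one would get from $\ell_2$-Lipschitzness alone. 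It is precisely this $\sqrt{n/\log n}$ factor that will turn the $n^{2/3}$ of the $\ell_2$ round complexity into an effective $n^{1/3}$ here.

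Next I would plug $f_\rho$ into the ball-accelerated method \BA{} of CJJ+23, which produces an $\eps'$-approximate minimizer of an $O(1)$-Lipschitz convex function in $\otilde((R/r)^{2/3})$ rounds and $\otilde((\sigma R/\eps')^2)$ stochastic subgradient queries, provided one has an $\otilde(1)$-depth oracle that approximately minimizes the objective on Euclidean balls of radius $r$. For Gaussian-smoothed convex functions $f_\rho$ the Hessian is operator-norm-stable over balls of radius $r = \Theta(\rho)$, so this ball oracle can be implemented in $\otilde(1)$ rounds by the accelerated local scheme already used in CJJ+23 / ACJ+21. Choosing $\rho = \Theta(\eps / (L\sqrt{\log n}))$ makes the smoothing error at most $\eps/2$, so any $\eps/2$-approximate minimizer of $f_\rho$ is an $\eps$-approximate minimizer of $f$. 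Substituting this $\rho$ into the \BA{} bounds yields $\otilde((R/\rho)^{2/3}) = \otilde((LR/\eps)^{2/3})$ rounds and $\otilde((\sigma R/\eps)^2)$ queries, matching the claimed complexity.

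The main obstacle is verifying that this stochastic-smoothing recipe slots cleanly into \BA{} as a black box; concretely I would need to confirm (i) the $\ell_\infty$ smoothing estimate $\E\norm{\zeta}_\infty = O(\rho\sqrt{\log n})$, a standard Gaussian-maximum tail bound and the content of \Cref{lem:gauss}; (ii) Hessian stability of $f_\rho$ on Euclidean balls of radius $\Theta(\rho)$, which follows from the usual Gaussian-smoothing Hessian calculation already used in the $\ell_2$ version of \BA{}; and (iii) that $g(x+\zeta)$ is unbiased for $\partial f_\rho(x)$ with second-moment bound $\sigma^2$, which is immediate from Leibniz differentiation under the Gaussian density. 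Because each of these ingredients is either classical or already present in the $\ell_2$ version, no new optimization machinery should be required beyond the single observation that the $\ell_\infty$-Lipschitz hypothesis licenses a larger smoothing radius $\rho$.
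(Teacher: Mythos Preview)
Your plan captures the key insight (\Cref{lem:gauss} lets you take $\rho = \Theta(\eps/(L\sqrt{\log n}))$ rather than $\Theta(\eps/(L\sqrt{n}))$), and the ball-acceleration wiring you describe is indeed the engine the paper uses. But there is one genuine gap.

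You apply \BA{} directly to the smoothed function $f_\rho$. The hypothesis of \BA{} (see \Cref{lem:ball-accel}) requires that the function being minimized have an exact minimizer $x^\star$ with $\norm{x^\star}_2 \le R'$, and the round count scales as $(R'/r)^{2/3}$. The pointwise bound $|f_\rho - f| \le \eps/2$ tells you the minimum \emph{value} of $f_\rho$ is within $\eps$ of $f(x_*)$, but says nothing about where the minimizer of $f_\rho$ sits---it can drift arbitrarily far, and in fact $f_\rho$ need not have a minimizer at all (take $f(x)=\max\{0,-x_1\}$ on $\R^n$: after smoothing the infimum is $0$ but is never attained). So you cannot instantiate \BA{} on $f_\rho$ with the desired $R$.

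The paper handles this explicitly: rather than smooth $f$, it first forms the $\ell_2$-regularized function $\freg^{0,R}(x) = f(x) + 2L\max\{0,\norm{x}_2 - R\}$ of \Cref{lem:constrained-reduce}, then smooths \emph{that} to obtain $F := \widehat{(\freg)}_\rho$, and finally checks by a short triangle-inequality argument that any minimizer of $F$ lies in the ball of radius $3R$. This is the step you are missing; without it the black-box invocation of \BA{} is not justified. Once you insert this regularization (and note that the stochastic subgradient of $\freg$ has second moment $O(\sigma^2 + L^2) = O(\sigma^2)$), the rest of your outline matches the paper's proof.
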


Due to the aforementioned connection between $\ell_\infty$-Lipschitz continuity and bounds on the subgradient, and the fact that $\norm{x_*}_\infty\leq 1$ implies $\norm{x_*}_2\leq \sqrt{n}$, \Cref{thm:l-inf} immediately yields a $\Ot(n^{1/3} \eps^{-2/3})$-round, $\Ot(n \eps^{-2})$-query algorithm for the problem of minimizing a convex function that is $1$-Lipschitz in $\ell_\infty$ and minimized at a point with $\ell_\infty$-norm at most 1. As discussed in the introduction, the parallel complexity of this algorithm is near-optimal for constant $\eps$ \cite{N94}.

\ellinfstd*

A related problem that we consider (\Cref{cor:constrained_l_inf}) is the \emph{constrained convex optimization in $\ell_\infty$ problem}, where we wish to output $y$ with $\norm{y}_\infty \leq 1$ and $f(y) \leq \min_{x : \norm{x}_\infty \leq 1} f(x) + \eps$. 
In~\Cref{sec:from_linf_to_submod}, we show how to use \Cref{thm:l-inf} to obtain our results for SFM. We then prove \Cref{thm:l-inf} in \Cref{sec:linf_alg}. As part of our reduction from SFM to Stochastic Convex Optimization in $\ell_\infty$ in \Cref{thm:l-inf}, we provide a general tool for reducing constrained to unconstrained minimization (\Cref{lem:constrained-reduce}); we use this lemma to facilitate our results in both sections.

\subsection{From Unconstrained Convex Optimization in $\ell_\infty$ to SFM}
\label{sec:from_linf_to_submod}

Here we show how to use \Cref{thm:l-inf-std} to prove the following theorem regarding SFM.

\sublinearSFM*

A key ingredient of our proof is the following general, simple technical tool which allows one to reduce constrained Lipschitz optimization over a ball in any norm to unconstrained minimization with only a very mild increase in parameters.

\begin{lemma}
\label{lem:constrained-reduce}
Let $f : \R^n \rightarrow \R$ be convex and $L$-Lipschitz with respect to norm $\norm{\cdot} : \R^n \rightarrow \R$. For any $c,x \in \R^n$ and $r\in \R$ let
\begin{equation}\label{eq:freg_defn}
    \freg^{c,r}(x) \defeq f(x) + 2L \cdot \max\{0, \norm{x - c} - r\}\,.
\end{equation}
Then $\freg^{c,r}(x)$ is convex and $3L$-Lipschitz with respect to $\norm{\cdot}$. 
Additionally, for any $y \in \R^n$ for which $\norm{y - c} \geq r$, define $y^{c, r} \defeq c + \frac{r}{\norm{y-c}} (y - c)$. Then, 
\begin{equation}
\label{eq:reg_project}
\norm{y^{c,r} - c} = r
\text{ and }
\freg^{c,r}(y^{c,r}) = f(y^{c,r}) \leq \freg^{c,r}(y) - L (\norm{y- c} - r) \,.
\end{equation}
Consequently, $\freg^{c,r}(x)$ has an unconstrained minimizer $x_*^{c,r}$ and all such minimizers satisfy
\begin{equation}
\label{eq:reg_facts}
\norm{x_*^{c,r} - c} \leq r
\text{ and }
\freg^{c,r}(x_*^{c,r}) = f(x_*^{c,r}) = \min_{x \in \R^n | \norm{x - c} \leq r} f(x)\,.
\end{equation}
\end{lemma}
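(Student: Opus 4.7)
The plan is to verify the three claims in \Cref{lem:constrained-reduce} in the order they are stated. For convexity, I decompose $\freg^{c,r}=f+g$ where $g(x)\defeq 2L\max\{0,\norm{x-c}-r\}$; convexity of $g$ follows because $x\mapsto\norm{x-c}$ is convex (triangle inequality plus absolute homogeneity) and $t\mapsto\max\{0,t-r\}$ is a nondecreasing convex scalar function, and composing a nondecreasing convex scalar function with a convex function preserves convexity. For the Lipschitz bound, the triangle inequality yields $|\norm{x-c}-\norm{y-c}|\le\norm{x-y}$ and $t\mapsto\max\{0,t-r\}$ is $1$-Lipschitz on $\R$, so $g$ is $2L$-Lipschitz with respect to $\norm{\cdot}$; combined with the $L$-Lipschitzness of $f$, this gives the advertised $3L$-Lipschitz bound by the standard fact that Lipschitz constants add under sums.

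For \eqref{eq:reg_project}, a direct calculation using absolute homogeneity of the norm gives $\norm{y^{c,r}-c}=r$, so the penalty term vanishes at $y^{c,r}$ and $\freg^{c,r}(y^{c,r})=f(y^{c,r})$. The same homogeneity computation shows $\norm{y^{c,r}-y}=\norm{y-c}-r$; applying the $L$-Lipschitzness of $f$ yields $f(y^{c,r})\le f(y)+L(\norm{y-c}-r)$, and subtracting this from the identity $\freg^{c,r}(y)=f(y)+2L(\norm{y-c}-r)$ produces exactly the stated gap of $L(\norm{y-c}-r)$. The factor $2$ in front of the penalty in \eqref{eq:freg_defn} is precisely what is needed here: any coefficient strictly larger than $L$ would suffice to make the projection step strictly decrease the value, and $2L$ is a convenient round choice.

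Finally, for \eqref{eq:reg_facts}, I would argue that $\freg^{c,r}$ attains its infimum and that every minimizer lies in $B\defeq\{x:\norm{x-c}\le r\}$. Indeed, \eqref{eq:reg_project} shows that for any $x\notin B$ the projection $x^{c,r}\in B$ is strictly better, so $\inf_{\R^n}\freg^{c,r}=\inf_{B}\freg^{c,r}$; since $B$ is compact in finite dimensions and $\freg^{c,r}$ is continuous (being Lipschitz), this infimum is attained in $B$, and the same strict-decrease argument forbids any minimizer from lying outside $B$. On $B$ the penalty term vanishes, so $\freg^{c,r}\equiv f$ there, which yields $\freg^{c,r}(x_*^{c,r})=f(x_*^{c,r})=\min_{x\in B}f(x)$, as claimed.

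The argument is a chain of elementary manipulations and no real obstacle is anticipated; the most error-prone step is the norm bookkeeping in \eqref{eq:reg_project}, since $\norm{\cdot}$ plays two distinct roles (the Lipschitz norm on $\R^n$ and the norm defining the penalty), so I would carry out the homogeneity and Lipschitz computations explicitly rather than by appeal to intuition. Conceptually, the lemma is the classical exact-penalty reduction from constrained to unconstrained convex optimization, specialized to a ball in an arbitrary norm, and the parameter $2L$ is chosen so that the penalty ``overpowers'' any descent direction $f$ might have outside $B$.
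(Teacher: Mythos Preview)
Your proposal is correct and follows essentially the same route as the paper's proof: decompose $\freg^{c,r}$ as $f$ plus the $2L$-Lipschitz convex penalty, compute $\norm{y^{c,r}-y}=\norm{y-c}-r$ and combine with $L$-Lipschitzness of $f$ to get \eqref{eq:reg_project}, then use the resulting strict decrease outside the ball together with continuity/compactness to locate every minimizer inside $B$. The only cosmetic differences are that the paper phrases the convexity/Lipschitz step via ``max of two convex Lipschitz functions'' rather than your composition argument, and your compactness justification for existence is slightly more explicit than the paper's one-line appeal to closedness and Lipschitzness.
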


\def\opt{\mathsf{opt}}

\Cref{lem:constrained-reduce} implies that minimizing $f$ subject to a distance constraint $\norm{x-c}\leq r$ reduces to unconstrained minimization of $\freg^{c,r}$. More formally, to compute an $\epsilon$-optimal solution to the constrained minimization problem, $\min_{x \in \R^n ~:~ \norm{x - c} \leq r} f(x)$, it suffices to instead compute an $\eps$-optimal solution, $x_\eps$,  to the unconstrained minimization problem $\min_{x} \freg^{c,r}(x)$, and then output that point $x_\eps$ if $\norm{x_\eps - c} \leq r$ and output $x_\eps^{r,c}$-otherwise. 
From \eqref{eq:reg_project}, we get $f(x_\eps^{r,c}) \leq \freg^{c,r}(x_\eps) \leq \opt + \eps$, where $\opt := \min_x \freg^{c,r}(x) = \min_{x\in \R^n|\norm{x-c}\leq r} f(x)$ and the last equality follows from \eqref{eq:reg_facts}.

We remark that the $2L$ in the definition of $\freg^{c,r}$ can be changed to $L + \delta$ for any $\delta \geq 0$ with the only effect of turning the $L$ in \eqref{eq:reg_project} to $\delta$ and causing \eqref{eq:reg_facts} to only hold for some minimizer (rather than all) if $\delta = 0$. The proof of \Cref{lem:constrained-reduce} is given below. 

\begin{proof}[Proof of \Cref{lem:constrained-reduce}]
First we show that $\freg^{c,r}(x)$ is convex and $3L$-Lipschitz.
Let $h(x) \defeq 2L (\norm{x - c} - r)$. By triangle inequality and homogenity of norms, $h(x)$ is convex and $2L$-Lipschitz (with respect to $\norm{\cdot}$). Furthermore, it is straightforward to check that the maximum of two convex, $L$-Lipschitz functions is convex and $L$-Lipschitz and that the sum of a $L$-Lipschitz and $2L$-Lipschitz function is convex and $3L$-Lipschitz. Since $\freg^{c,r}(x) = f(x) + \max\{0, h(x)\}$ and constant functions are convex and $0$-Lipschitz the result follows.

Next, consider $y \in \R^n$ with $\norm{y - c} \geq r$. Since $\norm{\cdot}$ is a norm we see that $\norm{y^{c,r} - c} = r$ and therefore $\freg^{c,r}(y^{c,r}) = f(y^{c,r})$. Furthermore, $\norm{y^{c,r} - y} = \norm{y - c} - r$ and so, 
\begin{align*}
\freg^{c,r}(y) 
&= f(y) + 2L ( \norm{y - c} - r)
\geq f(y^{c,r}) - L\norm{y^{c,r} - y} + 2L ( \norm{y - c} - r) \\
&= f(y^{c,r}) + L ( \norm{y - c} - r)
\,.
\end{align*}
where the inequality follows since $f$ is $L$-Lipschitz with respect to $\norm{\cdot}$.
This yields the desired result as it implies that values of $\freg^{c,r}$ are all larger than those of points where $\norm{x - c} \leq r$ for which $\freg^{c,r}(x) = f(x)$. Further, since $\{x \in \R^n | \norm{x - c} \leq r\}$ is closed and $f$ is Lipschitz, there exists a minimizer of $\min_{x \in \R^n : \norm{x - c} \leq r} f(x)$ and the result follows.
\end{proof}

\def\flov{f_{\mathsf{Lov}}}

Next, we obtain~\Cref{thm:cor-sub} by applying~\Cref{thm:l-inf} to the Lov\'asz extension of the submodular function $f$ extended to an unconstrained minimization problem by \Cref{lem:constrained-reduce}. 

Given a submodular function $f$ defined over subsets of an $n$ element universe, the Lov\'asz extension
$\flov: \mathbb{R}^n \rightarrow \mathbb{R}$ is defined as follows: 
$\flov(x) := \sum_{i \in [n]} x_{\pi_x(i)} (f(S_{\pi_x, i}) - f(S_{\pi_x, i-1}))$, where $\pi_x: [n] \rightarrow [n]$ is the permutation such that $x_{\pi_x(1)} \geq x_{\pi_x(2)} \geq \cdots \geq x_{\pi_x(n)}$ (ties broken in an arbitrary but consistent manner), and 
$S_{\pi_x, j}$ is the subset $\{\pi_x(1), \ldots, \pi_x(j)\}$. 

Next we give standard properties of the Lov\'asz extension and use them to prove \Cref{thm:cor-sub}.

\begin{fact}[e.g., \cite{GLS88,JB11}] \label{fact:lovasz_ext}
The following are true about the Lov\'asz extension $\flov$: 
\begin{enumerate}[noitemsep]
    \item $\flov$ is convex with $\min_{x\in [0,1]^n} \flov(x) = \min_{S\subseteq V} f(S)$. Indeed, given any $x\in [0,1]^n$, in $n$ queries
    one can find a subset $S$ with $f(S) \leq \flov(x)$. 
    \item Given any $x\in \R^n$ and corresponding permutation $\pi_x$, the vector $g\in \R^n$ where $g(x)_{(\pi(i))} := f(S_{\pi_x, i}) - f(S_{\pi_x, i-1})$
    is a subgradient of $\flov$ at $x$. Furthermore, $g(x)$ can be computed in $1$ round of $n$ queries to an evaluation oracle for $f$.
    \item If $f$ has range in $[-M, +M]$, then the $\ell_1$-norm of the subgradient is bounded, in particular, $\norm{g(x)}_1 \leq 3M$. Equivalently, $\flov$ is $3M$-Lipschitz
    with respect to the $\ell_\infty$-norm.
\end{enumerate}
\end{fact}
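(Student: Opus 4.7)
The plan is to reduce all three parts of the Fact to a single structural statement, Edmonds' characterization of $\flov$ via the base polytope,
\[
\flov(z) \;=\; \max_{y \in B(f)}\, \langle z, y \rangle \qquad \text{for all } z \in \R^n,
\]
where $B(f) \defeq \{y \in \R^n : y(S) \le f(S) \text{ for all } S \subseteq [n],\ y([n]) = f([n])\}$ and $y(S) \defeq \sum_{i \in S} y_i$. The key step is to verify that, for each $z$, the candidate vector $g(z)$ defined by $g(z)_{\pi_z(i)} = f(S_{\pi_z, i}) - f(S_{\pi_z, i-1})$ both lies in $B(f)$ and attains this maximum. The equation $\langle z, g(z)\rangle = \flov(z)$ is immediate from the definition of $\flov$. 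For membership, writing any $S = \{\pi_z(i_1), \dots, \pi_z(i_k)\}$ with $i_1 < \cdots < i_k$, I would apply submodularity $f(A \cup \{e\}) - f(A) \le f(B \cup \{e\}) - f(B)$ with $A = S_{\pi_z, i_j - 1}$, $B = \{\pi_z(i_1), \dots, \pi_z(i_{j-1})\}$, $e = \pi_z(i_j)$ for each $j$; summing telescopes to $g(z)(S) \le f(S)$, while $g(z)([n]) = f([n])$ is a single telescope. That $g(z)$ is a maximizer in $B(f)$ is the standard correctness of Edmonds' greedy algorithm, provable by an LP-duality / exchange argument.

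With this in hand, convexity of $\flov$ in Part 1 is immediate (pointwise maximum of linear functionals). Part 2 also drops out: for any $x, z \in \R^n$,
\[
\flov(z) \;=\; \max_{y \in B(f)} \langle z, y\rangle \;\ge\; \langle z, g(x)\rangle \;=\; \flov(x) + \langle g(x), z - x\rangle,
\]
so $g(x) \in \partial \flov(x)$; the $n$ evaluations defining $g(x)$ depend only on $\pi_x$ (determined from $x$ by sorting, requiring no queries) and can be issued in a single parallel round.

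For the remaining claims in Part 1, I would use an Abel summation: for $x \in [0,1]^n$ with $\pi = \pi_x$,
\[
\flov(x) \;=\; (1 - x_{\pi(1)})\, f(\emptyset) \;+\; \sum_{i=1}^{n-1} (x_{\pi(i)} - x_{\pi(i+1)})\, f(S_{\pi, i}) \;+\; x_{\pi(n)}\, f([n]).
\]
Since $x$ is sorted and lies in $[0,1]^n$, all coefficients are non-negative and sum to $1$, so $\flov(x)$ is a convex combination of the prefix values $\{f(S_{\pi, i})\}_{i=0}^n$. In particular $\min_i f(S_{\pi, i}) \le \flov(x)$, which yields $\min_{x \in [0,1]^n} \flov(x) \ge \min_S f(S)$ (the reverse being immediate from $\flov(\mathbf{1}_S) = f(S)$) and the rounding guarantee: query the $n$ nontrivial prefix sets in parallel and return the best.

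The step I expect to require the most care is Part 3, where it would be easy to lose a factor of $n$. The plan is to prove that every $y \in B(f)$ satisfies $\|y\|_1 \le 3M$ and then apply it to $g(x) \in B(f)$. Partition $[n] = P \sqcup N$ where $P = \{i : y_i \ge 0\}$ and $N = \{i : y_i < 0\}$. The base-polytope constraint on $P$ gives $\sum_{i \in P} y_i = y(P) \le f(P) \le M$, and using $y([n]) = f([n]) \ge -M$ gives $-\sum_{i \in N} y_i = y(P) - y([n]) \le M - (-M) = 2M$. Adding yields $\|y\|_1 \le 3M$. Finally, the $\ell_\infty$-Lipschitz equivalence is the standard consequence of dual-norm bounded subgradients: for any $x, y \in \R^n$, $\flov(y) - \flov(x) \le \langle g(y), y - x\rangle \le \|g(y)\|_1 \|y - x\|_\infty \le 3M \|y - x\|_\infty$, and symmetrically.
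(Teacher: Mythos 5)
The paper states this as a Fact with a citation to the literature and gives no proof of its own, so there is nothing in-paper to compare against; your argument is the standard one via the base polytope $B(f)$ and Edmonds' greedy theorem, and it checks out in all the places that matter — the telescoping membership proof of $g(z)\in B(f)$, the Abel-summation convex-combination identity giving the rounding step in $n$ queries, and the $3M$ bound via $y(P)\le f(P)\le M$ and $-y(N)=y(P)-f([n])\le 2M$. The one step you defer without proof, that the greedy vector actually attains $\max_{y\in B(f)}\langle z,y\rangle$ (which is what both convexity and the subgradient inequality rest on), is precisely the content of the cited references, so that is an acceptable place to stop.
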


As discussed in \Cref{subsec:approach}, to prove \Cref{thm:cor-sub}, it suffices to prove the following more general result regarding approximately minimizing a submodular function.

\begin{theorem}[$\epsilon$-approximate minimizer for SFM]
\label{cor:approx_submodular}
    There is an algorithm that, when given an evaluation oracle for submodular $f: 2^{[n]} \rightarrow \mathbb{R}$
    with minimizer $x_*$, $f(\emptyset) = 0$ and $|f(S)| \leq M, \forall S \subseteq [n]$, finds a set $S$ with $f(S) \le f(x_*) + \epsilon M$ in $\Ot(n^{1/3} / \epsilon^{2/3})$ rounds and a total of $\Ot(n^2 / \epsilon^2)$ queries. 
\end{theorem}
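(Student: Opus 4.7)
The plan is to reduce SFM to unconstrained $\ell_\infty$-Lipschitz convex optimization and invoke \Cref{thm:l-inf-std}. Let $\flov$ denote the Lovász extension of $f$ and set $h \defeq \flov/(3M)$. By \Cref{fact:lovasz_ext}, $h$ is convex, $1$-Lipschitz in $\ell_\infty$, satisfies $\min_{x \in [0,1]^n} h(x) = (3M)^{-1}\min_{S \subseteq [n]} f(S)$, admits a subgradient at any $x \in \R^n$ computable in one round of $n$ evaluation queries, and admits a rounding $x \in [0,1]^n \mapsto S$ with $f(S) \leq \flov(x)$ also in one round of $n$ queries.

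To remove the cube constraint, I would apply \Cref{lem:constrained-reduce} to $h$ with the $\ell_\infty$ norm, center $c = \tfrac{1}{2}\mathbf{1}$, and radius $r = \tfrac{1}{2}$; since $\{x : \|x - c\|_\infty \leq r\} = [0,1]^n$, the regularized function $\freg \defeq \freg^{c,r}$ is convex, $3$-Lipschitz in $\ell_\infty$, and by \eqref{eq:reg_facts} every unconstrained minimizer of $\freg$ lies in $[0,1]^n$ with the same value as $\min_{x \in [0,1]^n} h(x)$. Shifting by $c$ and rescaling by $1/3$, the function $\tilde f(y) \defeq \freg(y+c)/3$ is convex, $1$-Lipschitz in $\ell_\infty$, with minimizer $y_*$ satisfying $\|y_*\|_\infty \leq r = \tfrac{1}{2} \leq 1$, exactly the hypothesis of \Cref{thm:l-inf-std}. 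A subgradient of $\tilde f$ at $y$ is the sum (up to scaling) of a subgradient of $h$ at $y+c$ (via \Cref{fact:lovasz_ext}) and an explicit subgradient of the piecewise-linear regularizer $2\max\{0, \|x-c\|_\infty - r\}$ (no evaluations needed), so each subgradient of $\tilde f$ costs one round of $n$ queries to $f$.

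Invoking \Cref{thm:l-inf-std} on $\tilde f$ with target accuracy $\eps' = \eps/9$ yields, in $\Ot(n^{1/3}/\eps^{2/3})$ rounds and $\Ot(n/\eps^2)$ subgradient queries, a point $\tilde y$ with $\tilde f(\tilde y) \leq \tilde f(y_*) + \eps/9$, equivalently $\freg(\tilde y + c) \leq \min_{x \in [0,1]^n} h(x) + \eps/3$. Set $\tilde x \defeq \tilde y + c$; if $\tilde x \notin [0,1]^n$, replace it by the projection $\tilde x^{c,r}$ from \eqref{eq:reg_project}, which lies in the cube and only decreases $\freg$. A final round of $n$ queries via \Cref{fact:lovasz_ext} extracts a set $S$ with
\[
f(S) \leq \flov(\tilde x) = 3M \cdot h(\tilde x) = 3M \cdot \freg(\tilde x) \leq \min_{S'} f(S') + \eps M,
\]
where the second equality uses $\tilde x \in [0,1]^n$. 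Multiplying each subgradient query by its $n$ evaluation-query cost gives the claimed $\Ot(n^2/\eps^2)$ total.

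The overall approach is straightforward assembly once the tools are in place. The main delicate choice is using the $\ell_\infty$ norm in \Cref{lem:constrained-reduce}, which preserves the $\ell_\infty$-Lipschitz structure of $h$ (a naive $\ell_2$ choice of regularizer would mix norms awkwardly) and makes the feasible set coincide exactly with $[0,1]^n$ at radius $r = \tfrac{1}{2}$. The remaining obstacle is purely bookkeeping: tracking the three constant factors from the $1/(3M)$ scaling of $\flov$, the Lipschitz constant $3$ of $\freg$, and the rescaling by $1/3$ in $\tilde f$, all absorbed by the choice $\eps' = \eps/9$ and into the $\Ot$ notation.
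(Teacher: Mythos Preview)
Your proposal is correct and follows essentially the same route as the paper: reduce SFM to minimizing $\flov$ over $[0,1]^n$, apply \Cref{lem:constrained-reduce} with the $\ell_\infty$ norm at center $c=\tfrac{1}{2}\mathbf{1}$ and radius $\tfrac{1}{2}$ to pass to an unconstrained problem, and invoke the $\ell_\infty$ optimization theorem. The only cosmetic differences are that the paper works with $\flov$ directly (which is $3M$-Lipschitz) and calls \Cref{thm:l-inf}, whereas you first rescale by $1/(3M)$, shift by $c$, and call the normalized corollary \Cref{thm:l-inf-std}; the resulting bounds are identical.
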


\begin{proof}[Proof of \Cref{cor:approx_submodular}]
By \Cref{fact:lovasz_ext}, SFM reduces to minimizing $\flov$ over $x\in [0,1]^n$ which is the same set $\{x\in \R^n:\norm{x-c}_\infty \leq 0.5\}$ where $c$ is the $n$-dimensional
vector with all entries $0.5$. By~\Cref{lem:constrained-reduce}, we can do so by applying~\Cref{thm:l-inf} to the regularized version $\freg^{c,0.5}$ of $\flov$ with respect to the $\ell_\infty$-norm.
This regularized function is guaranteed to have a minimizer $x_*$ with $\norm{x_*}_2\leq \sqrt{n}$ which also minimizes $\flov$ in $[0,1]^n$.
The subgradient of this regularized function at any point can be computed from the subgradient of $\flov$ at the same point which takes $n$ evaluation oracle queries to the submodular function. Hence, we obtain an $\eps M$-approximate minimizer in $\Ot(n^{1/3}/\eps^{2/3})$ rounds and with a total of $\Ot(n^2/\eps^2)$ queries to the evaluation oracle of $f$.
\end{proof}

Now, we are ready to prove \Cref{thm:cor-sub}.

\begin{proof}[Proof of \Cref{thm:cor-sub}]
    Set $\epsilon = \frac{1}{2 M}$ and apply \Cref{cor:approx_submodular} to obtain, in $\Ot(n^{1/3}M^{2/3})$ rounds and with a total of $\Ot(n^2M^2)$ queries to the evaluation oracle of $f$, a $x\in [0,1]^n$ with $\flov(x) \leq \min_{z\in [0,1]^n} \flov(z) + \frac{1}{2}$. Then by property 1 in \Cref{fact:lovasz_ext}, one can get a subset $A\subseteq V$ with
$f(A) \leq \min_{S\subseteq V} f(S) + \frac{1}{2}$. As $f$ is assumed to be integer valued, $A$ must be the minimizer of the submodular function. 
\end{proof}

\subsection{Parallel Stochastic Convex Optimization in $\ell_\infty$}
\label{sec:linf_alg}

Here we prove \Cref{thm:l-inf} regarding our new parallel results for the stochastic convex optimization problem in $\ell_\infty$ (\Cref{def:l-inf}). Throughout this subsection, in our exposition, lemma statements, and proofs we assume that we are in the setting of \Cref{def:l-inf}.

To prove \Cref{thm:l-inf} we apply the approach of \cite{CJJ+23} with two modifications. First, we consider the convolution of $f$ with a centered Gaussian density function with covariance $\rho^2 \mathbf{I}_n$. However we show that in our setting, it is possible to use a larger value of $\rho$ without perturbing the function value too much, due to the $\ell_{\infty}$ geometry. Unfortunately, the minimizer of the convolved function may move outside the box of radius $R$. Thus, the second modification we make is working with a regularized function, $\freg$, which is pointwise close to $f$, still $L$-Lipschitz in the $\ell_\infty$ norm, and keeps the minimizer in the ball or radius $R$ even after applying the convolution with a Gaussian.

In the remainder of this subsection we first present the ingredients going into the proof of \Cref{thm:l-inf}, then we give a complete proof of  \Cref{thm:l-inf} using these ingeredients, and then we conclude the section with a corollary of \Cref{thm:l-inf} and \Cref{lem:constrained-reduce}, namely \Cref{cor:constrained_l_inf}. 
We start with our bound on function perturbation after adding a Gaussian with covariance $\rho^2 \mathbf{I}_n$ to an $\ell_{\infty}$ Lipschitz function (\Cref{lem:gauss}). We then introduce the concept of a ball optimization oracle, along with a result on how to implement it in low depth for the special case of a function that is the result of Gaussian convolution (\Cref{lem:parallel-sgd}).
Lastly, we present the result which allows us to use a ball optimization oracle black-box to obtain the desired depth (\Cref{lem:ball-accel}).

Now, we are ready to present the lemma which allows us to obtain a better dependence of depth on the dimension $n$, compared to the naive $n^{2/3}$ obtained by directly applying the $\ell_2$-Lipschitz optimization result. 
We start with the definition of Gaussian convolution.

\paragraph{Gaussian Convolution.} 
Let $\gamma_\rho := (2 \pi \rho)^{-n/2} \exp(- \frac{1}{2 \rho^2} \|x\|_2^2)$ be the probability density function of $\mathcal{N}(0, \rho^2 \mathbf{I}_n)$. 
Given a function $f: \R^n \rightarrow \R$, we define its convolution with a Gaussian of covariance $\rho^2 \mathbf{I}_n$ by $\widehat{f_\rho} := f * \gamma_\rho$, i.e.
\begin{equation}\label{eq:gauss-conv}
\widehat{f_\rho}(x) := \E_{y \sim \mathcal{N}(0, \rho^2 \mathbf{I}_n)} [f(x + y)] = \int_{\R^n} f(x - y) \gamma_\rho(y) \mathrm{d} y . 
\end{equation}

Next, we present a lemma which allows us to obtain a better dependence on the dimension $n$ in depth, compared to the naive $n^{2/3}$ obtained by directly applying the $\ell_2$-Lipschitz optimization result. This lemma shows that we can perform more Gaussian smoothing (as compared to the $\ell_2$-setting) without perturbing the function too much (as mentioned in \Cref{subsec:approach}).

\begin{lemma}[Gaussian Convolution Distortion Bound for $\ell_\infty$]\label{lem:gauss}
Let $f: \mathbb{R}^n \rightarrow \mathbb{R}$ be $L$-Lipschitz with respect to the $\ell_\infty$-norm. Then for any point $x\in \R^n$, 
we have $|\widehat{f_\rho}(x) - f(x)| \leq L\rho \cdot \sqrt{2\log n}$.
\end{lemma}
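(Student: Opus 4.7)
The plan is to expand the convolution as an expectation, bound the integrand pointwise via the $\ell_\infty$-Lipschitz hypothesis, and then apply a classical estimate on the expected $\ell_\infty$-norm of a centered Gaussian vector.

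First, writing $y \sim \mathcal{N}(0,\rho^2 \mathbf{I}_n)$ and using the definition \eqref{eq:gauss-conv} together with the triangle inequality (pulling the absolute value inside the expectation), I would reduce the claim to
\[
    |\widehat{f_\rho}(x) - f(x)| \;=\; \bigl|\E_y[f(x+y) - f(x)]\bigr| \;\le\; \E_y|f(x+y) - f(x)|.
\]

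Second, I would invoke the $\ell_\infty$-Lipschitzness of $f$ to upper bound the integrand pointwise by $L\|y\|_\infty$, thereby reducing the task to showing that $\E_y \|y\|_\infty \le \rho\sqrt{2\log n}$ (up to an absorbed constant).

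Third, I would bound $\E\|y\|_\infty = \E\max_{i\in[n]} |y_i|$ using the standard sub-Gaussian moment generating function argument: by Jensen's inequality and the fact that each $y_i$ is centered Gaussian with variance $\rho^2$, for any $t>0$,
\[
    \exp\!\bigl(t\,\E\max_i |y_i|\bigr) \;\le\; \sum_{i=1}^n \E\bigl[e^{t y_i} + e^{-t y_i}\bigr] \;\le\; 2n\,e^{t^2\rho^2/2}.
\]
Taking logs, dividing by $t$, and optimizing at $t = \sqrt{2\log(2n)}/\rho$ would yield $\E\|y\|_\infty \le \rho\sqrt{2\log(2n)}$, which combined with the previous two steps gives the desired bound.

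There is no significant obstacle here; this is a routine reduction to a textbook Gaussian maximal inequality. The only minor subtlety is that the direct MGF computation produces $\sqrt{2\log(2n)}$ rather than exactly $\sqrt{2\log n}$. This difference can be absorbed into lower-order constants, or one can recover the slightly sharper form by integrating the one-sided tail bound $\Pr[|y_i|\ge t\rho] \le 2\exp(-t^2/2)$ against a tuned threshold and using the union bound only on the tail portion.
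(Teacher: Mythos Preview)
Your proposal is correct and follows essentially the same route as the paper: pull the absolute value inside the expectation, apply $\ell_\infty$-Lipschitzness pointwise, and reduce to the expected $\ell_\infty$-norm of $\mathcal{N}(0,\rho^2\mathbf{I}_n)$. The paper simply cites the Gaussian maximal bound as a known fact, whereas you spell out the MGF argument; your observation about $\sqrt{2\log(2n)}$ versus $\sqrt{2\log n}$ is a fair point, and indeed the paper is slightly loose on this constant as well.
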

\begin{proof} Note that 
$|\widehat{f_\rho}(x) - f(x)| \leq \int_{\R^n} |f(z) - f(x)| \gamma_\rho(x-z) \mathrm{d}z \leq L \int_{\R^n} \norm{x-z}_\infty \gamma_\rho(x-z) \mathrm{d}z $
where the first inequality follows from the definition of $\widehat{f_\rho}$ and the second follows as $f$ is $L$-Lipschitz in $\ell_\infty$-norm. The RHS is simply 
the expected $\ell_\infty$-norm of a zero-mean random Gaussian vector with covariance $\rho^2 \mathbf{I}_n$, and this
is $\Theta(\rho \sqrt{\log n})$ (e.g., \cite{v18}). More precisely, we get
\[
|\widehat{f_\rho}(x) - f(x)| \leq L\cdot \E_{y\sim \mathcal{N}(0, \rho^2 \mathbf{I}_n)} \norm{y}_\infty \leq L\rho \cdot \sqrt{2\log n}. \qedhere
\]
\end{proof}

As mentioned in \Cref{subsec:approach}, by contrast, convolving a function $f$ that is $L$-Lipschitz in $\ell_2$ with a Gaussian of covariance $\rho^2 \mathbf{I}_n$ could change the function value by $O(\rho \sqrt{n})$. Hence, the $\ell_{\infty}$ geometry allows us to add a larger amount of Gaussian smoothing without changing the function value by more than $\eps$, which in turn allows for better rates. 

\paragraph{Ball Optimization.}
A subroutine that we use for minimizing $\widehat{f_\rho}$ is called a \emph{ball optimization oracle}. As suggested by \cite{carmon2020acceleration}, the concept of ball optimization oracle is related to the notion of trust regions, explored in several papers, such as \cite{conn2000trust}. 
The particular ball optimization procedure we employ takes a function $F:\R^n\to \R$ and a point $\bar{x}\in \R^n$, which is an approximate solution 
to $F$ in a small ball of $\bar{x}$. More formally, we work with the following definition:

\begin{definition}[Ball Optimization Oracle \cite{CJJ+23}]
Let $F:\R^n\to \R$ be a convex function. $\cO$ is an $(\phi, \lambda,r)$-ball optimization oracle for $F$ if given any $\bar{x}\in \R^n$, it returns an $x\in \R^n$ with the property
	\[
		\E\left[F(x) + \frac{\lambda}{2} \cdot \norm{x - \bar{x}}^2_2\right] ~\leq~ F(\xstarlocal) + \frac{\lambda}{2}\norm{\xstarlocal - \bar{x}}^2_2 + \phi \ ,
	\]
	where $\xstarlocal = \arg \min_{x\in B_{\bar{x}}(r)} (F(x) + \frac{\lambda}{2}\norm{x-\bar{x}}^2_2)$.
\end{definition}

From \cite{carmon2020acceleration} it is known is that for any Lipschitz convex function $f$, any stochastic subgradient oracle $g$ as above, and any $\rho$, if we set $r = \rho$, then efficient ball-optimization oracles exist.
More formally, we use the following proposition from \cite{CJJ+23} which is in turn inspired from \cite{ACJ+21}.

\begin{proposition}[Proposition 3,~\cite{CJJ+23}]\label{lem:parallel-sgd}
Let $f: \R^n \to \R$ be convex and $g: \R^n \to \R^n$ be a stochastic subgradient oracle satisfying $\E[g(x)] \in \partial f(x)$ and $\E \norm{g(x)}_2^2 \le \sigma^2$ for all $x \in \R^n$.
Let $\widehat{f_\rho} := f * \gamma_\rho$, i.e.,
\begin{equation}\label{eq:gauss-conv}
\widehat{f_\rho}(x) := \E_{y \sim \mathcal{N}(0, \rho^2 \mathbf{I}_n)} [f(x - y)] = \int_{\R^n} f(x - y) \gamma_\rho(y) \mathrm{d} y . 
\end{equation} 
Then there is a $(\phi,\lambda,\rho)$-ball optimization oracle for $\widehat{f_\rho}$ which makes $O(\frac{\sigma^2}{\phi\lambda})$ total queries to $g$ in a constant number of rounds.
\end{proposition}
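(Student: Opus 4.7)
The plan is to implement the ball optimization oracle by running mini-batch projected stochastic gradient descent (SGD) for a constant number of iterations on the regularized objective $F(x) := \widehat{f_\rho}(x) + \frac{\lambda}{2}\norm{x - \bar{x}}_2^2$ over the ball $B_{\bar{x}}(\rho)$, exploiting the fact that a parallel mini-batch of size $B$ reduces the effective gradient variance by a factor of $B$.

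First, I would build a stochastic gradient oracle for $F$. Since $\widehat{f_\rho}$ is the convolution of $f$ with the smooth Gaussian density $\gamma_\rho$, a standard interchange-of-differentiation-and-integration argument shows $\nabla \widehat{f_\rho}(x) = \E_{y \sim \mathcal{N}(0, \rho^2 \mathbf{I}_n)}[g(x-y)]$. Consequently, the random vector $\tilde{g}(x) := g(x - y) + \lambda(x - \bar{x})$ with $y \sim \mathcal{N}(0, \rho^2 \mathbf{I}_n)$ is an unbiased estimator of $\nabla F(x)$, and its second moment is at most $\sigma^2 + O(\lambda^2 \rho^2)$, which is $O(\sigma^2)$ in the relevant regime (the extra contribution can be absorbed by slightly enlarging the batch otherwise). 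Each sample uses a single call to $g$, so a mini-batch of size $B$ costs exactly $B$ parallel queries.

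Second, I would apply mini-batch SGD with projection onto $B_{\bar{x}}(\rho)$. Because $F$ is $\lambda$-strongly convex, classical analyses of stochastic gradient methods with a suitable stepsize schedule and iterate averaging yield expected suboptimality $O(\sigma^2 / (\lambda T))$ after $T$ single-sample iterations. Replacing each iteration's stochastic gradient by the average of $B$ independent samples evaluated in parallel in a single round reduces the effective per-step variance to $\sigma^2/B$, giving expected error $O(\sigma^2 / (\lambda T B))$ after $T$ rounds and $TB$ total queries. Choosing $T = O(1)$ and $B = O(\sigma^2 / (\phi \lambda))$ therefore achieves $\phi$-expected suboptimality using $O(\sigma^2 / (\phi \lambda))$ total queries in a constant number of rounds, matching the claim.

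The main obstacle is establishing the $1/(\lambda T)$-type rate for non-smooth strongly convex stochastic minimization: vanilla SGD only yields $1/\sqrt{T}$, so one must invoke a particular stepsize and averaging scheme (for instance, the classical analysis of averaged SGD on strongly convex objectives). One should also verify that the Euclidean projection onto $B_{\bar{x}}(\rho)$ preserves the convergence bound (it does, being a nonexpansion) and that the initial iterate can be chosen so that the diameter term remains $O(\rho^2)$. Once these pieces are assembled, the output satisfies $\E[F(x)] \le F(\xstarlocal) + \phi$, which, after substituting $F = \widehat{f_\rho} + \tfrac{\lambda}{2}\norm{\cdot - \bar{x}}_2^2$ and rearranging, is precisely the $(\phi, \lambda, \rho)$-ball optimization oracle guarantee.
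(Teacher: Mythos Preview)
The paper does not prove this proposition; it is quoted from \cite{CJJ+23} (which in turn builds on \cite{ACJ+21}), so there is no in-paper argument to compare against. That said, your sketch has a genuine gap. Your key step---that mini-batching turns the $O(\sigma^2/(\lambda T))$ strongly-convex SGD rate into $O(\sigma^2/(\lambda T B))$---does not hold. The classical non-smooth strongly convex rate is $O(G^2/(\lambda T))$ where $G^2$ bounds the \emph{second moment} $\E\|\tilde g(x)\|_2^2$, not the variance. Averaging $B$ samples cuts the variance by $B$, but the second moment of the batch average is still at least $\|\nabla F(x)\|_2^2$, and $\|\nabla \widehat{f_\rho}(x)\|_2$ can be of order $\sigma$ throughout the ball; so with $T=O(1)$ you are stuck at error $\Theta(\sigma^2/\lambda)$ regardless of $B$. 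Switching to a smooth analysis ($\widehat{f_\rho}$ is $O(\sigma/\rho)$-smooth) does make the variance term scale as $1/B$, but then a deterministic $(1-\lambda/\beta)^T$ contraction appears, and $T=O(1)$ does not kill it unless the condition number $\sigma/(\rho\lambda)$ is $O(1)$, which is not assumed.

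The ingredient you never invoke is the \emph{Hessian stability} of Gaussian convolution: for all $x,y\in B_{\bar x}(\rho)$ one has $\nabla^2\widehat{f_\rho}(x)\preceq O(1)\cdot\nabla^2\widehat{f_\rho}(y)$. This is the structural fact that \cite{ACJ+21,CJJ+23} exploit: it means the regularized objective has $O(1)$ condition number over the ball in the local metric induced by $\nabla^2\widehat{f_\rho}(\bar x)+\lambda I$, which is precisely what allows the error to be driven to $\phi$ in a constant number of rounds. Your proposal uses the Gaussian only to produce an unbiased gradient estimator and otherwise treats $\widehat{f_\rho}$ as a generic Lipschitz or smooth function; without Hessian stability the constant-round claim cannot be established.
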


\paragraph{Highly Parallel Optimization.}
As shown in \cite{CJJ+23}, the ball optimization oracle above can be used for highly-parallel optimization as follows.

\begin{proposition}[Proposition 2 in~\cite{CJJ+23}]\label{lem:ball-accel}
Fix a function $F:\R^n \to \R$ which is $L$-Lipschitz with respect to the $\ell_2$-norm and convex. Suppose $R \geq 0$ is a parameter such that $\xstar \in \arg\min_x F(x)$ 
satisfies $\norm{\xstar}_2 \leq R$. Let $r \in (0,R]$ and $\epsopt \in (0,L R]$ be two parameters. Define the following quantities
\begin{equation}\label{eq:def-q}
\kappa := \frac{L R}{\epsopt}, \quad K:= \left(\frac{R}{r} \right)^{\frac{2}{3}}, \quad 
\text{and} \quad
\lambda_* := \frac{\epsopt K^2}{R^2} \log^2 \kappa .
\end{equation}
Then, there exists a universal constant $C > 0$ and an algorithm \BA~ which runs in $CK\log \kappa$ {\em iterations} and produces a point $x$ such that $\E F(x) \leq F(\xstar) + \epsopt$. Moreover, 
\vspace{-0.2cm}
	\begin{enumerate}[leftmargin=*,noitemsep]
		\item Each iteration makes at most $C \log^2(\frac{R\kappa}{r})$ calls to $(\frac{\lambda r^2}{C}, \lambda, r)$-ball optimization oracle with values of $\lambda \in [\frac{\lambda_*}{C}, \frac{C L}{\epsopt}]$.
		\item For each $j \in [\lceil\log_2 K + C\rceil]$, at most $C^2 \cdot 2^{-j} K\log(\frac{R\kappa}{r})$ iterations query a $(\frac{\lambda r^2}{C 2^j} \cdot \log^{-2}(\frac{R\kappa}{r}), \lambda, r)$-ball optimization oracle for some $\lambda \in [\frac{\lambda_*}{C}, \frac{C L}{\epsopt}]$.
	\end{enumerate}
\end{proposition}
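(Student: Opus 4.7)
Since \Cref{lem:ball-accel} is Proposition~2 of~\cite{CJJ+23} quoted verbatim, my plan is to reconstruct the Monteiro--Svaiter-style acceleration argument underlying it. I would set up an accelerated outer loop maintaining three sequences: primal iterates $x_t$, dual-averaging iterates $z_t$, and cumulative weights $A_t$ with $a_t \defeq A_t - A_{t-1}$. At each iteration $t$, I form the midpoint $\bar{x}_t = (A_{t-1}/A_t)\, x_{t-1} + (a_t/A_t)\, z_{t-1}$, query the $(\phi_t,\lambda_t,r)$-ball optimization oracle at $\bar{x}_t$ to obtain $x_t$, and define $v_t \defeq \lambda_t(\bar{x}_t - x_t)$ to serve as a surrogate subgradient of $F$ at $x_t$. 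The ball-oracle guarantee applied with $x = \xstar$ (or more precisely with the projection of $\xstar$ onto $B_{\bar{x}_t}(r)$, handled by a case split on whether $\xstar$ lies in the ball) combined with the update $z_t = z_{t-1} - a_t v_t$ then drives the standard estimating-sequence inequality $\tfrac{1}{2}\norm{\xstar - z_t}_2^2 \leq \tfrac{1}{2}\norm{\xstar - z_{t-1}}_2^2 - a_t[F(x_t) - F(\xstar)] + O(a_t \phi_t / \lambda_t)$.

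\textbf{Line search and rate.} Obtaining the accelerated rate requires $\lambda_t$ to satisfy a movement condition $\norm{x_t - \bar{x}_t}_2 \asymp r$, which forces the relation $a_t^2 \lambda_t r^2 \asymp A_t$ between the step-size, the regularization, and the accumulated weight. Since the correct $\lambda_t$ is not available in closed form, I would enforce this via a binary search on $\lambda \in [\lambda_*/C, CL/\epsopt]$; the $\mathrm{poly}(R\kappa/r)$ length of the search interval, together with the need to keep probe accuracy below $\phi_t$, accounts for the $O(\log^2(R\kappa/r))$ ball-oracle probes per iteration stated in item~(1). Solving the recursion $a_t^2 \lambda_t r^2 \asymp A_t$ using the movement condition gives $A_t \gtrsim t^{3} r^{2}/\mathsf{polylog}$, so taking $t = C K \log\kappa$ iterations suffices to make $A_t \gtrsim R^2/\epsopt$, which combined with the estimating-sequence invariant yields $\E F(x_t) - F(\xstar) \leq R^2/(2 A_t) + O(\sum_s a_s \phi_s / (A_t \lambda_s)) \leq \epsopt$ as long as the chosen $\phi_t \asymp \lambda_t r^2 / C$ is absorbed correctly.

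\textbf{Refined histogram (item (2)).} The subtler part of the statement is item~(2), which refines the total count into a dyadic histogram asserting at most $C^2 2^{-j}K\log(R\kappa/r)$ iterations use $\lambda \in [2^{-j}\lambda_*/C, 2^{-j+1}\lambda_*/C]$. This goes beyond a global iteration bound and requires a weighted telescoping on $A_t/A_{t-1}$: iterations whose oracle moves much less than $r$ (equivalently, iterations with $\lambda_t \gg \lambda_*$) increase $A_t$ geometrically more slowly, so bucketing iterations by the scale of $\lambda_t$ and using that the total growth $A_T/A_0$ is bounded by $\mathrm{poly}(R\kappa/r)$ recovers the $2^{-j}$ weighting.

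\textbf{Main obstacle.} The biggest challenge I anticipate is aligning the per-iteration binary-search cost with the histogram count so that the oracle accuracies $\phi_t = \lambda_t r^2 / (C 2^{j}\log^2(R\kappa/r))$ claimed in item~(2) are tight enough not to spoil the $R^2/A_T$ convergence yet loose enough that each $\phi_t$ is achievable by \Cref{lem:parallel-sgd} in $\otilde(1)$ parallel rounds. This requires carefully tracking how the summable error term $\sum_s a_s \phi_s$ telescopes across scales $j$, and it is the bookkeeping here --- rather than any single new inequality --- that I expect to take the most care.
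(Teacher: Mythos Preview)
The paper does not prove \Cref{lem:ball-accel} at all: it is imported verbatim from~\cite{CJJ+23} (as the label ``Proposition~2 in~\cite{CJJ+23}'' indicates) and used purely as a black box in the proof of \Cref{thm:l-inf}. So there is no argument in the paper to compare your proposal against; the paper's ``proof'' is the citation.

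Your reconstruction of the underlying Monteiro--Svaiter ball-acceleration argument is broadly on the right track as a sketch of what~\cite{CJJ+23} actually does: the three-sequence accelerated scheme, the binary search on $\lambda$ to enforce a movement condition $\norm{x_t-\bar x_t}_2\asymp r$, and the resulting $A_t\gtrsim t^3 r^2$ growth are the correct ingredients. One point to be careful about: your reading of item~(2) is slightly off. You describe it as bucketing iterations by the \emph{value of $\lambda$} (``iterations use $\lambda\in[2^{-j}\lambda_*/C,2^{-j+1}\lambda_*/C]$''), but the statement as written buckets by the \emph{oracle accuracy} $\phi$: bucket $j$ consists of iterations that call a $(\tfrac{\lambda r^2}{C2^j}\log^{-2}(\tfrac{R\kappa}{r}),\lambda,r)$-oracle, with $\lambda$ still ranging over the full interval $[\lambda_*/C, CL/\epsopt]$. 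The $2^{-j}$ trade-off is between how many iterations need a given accuracy and how demanding that accuracy is, and this is exactly what makes the total-query calculation in the proof of \Cref{thm:l-inf} telescope (the $2^j$ in the oracle cost from \Cref{lem:parallel-sgd} cancels the $2^{-j}$ in the iteration count). If you were to carry out the full proof, you would want to track this distinction, since the histogram arises from how the line-search accuracy interacts with the estimating-sequence error term rather than from the distribution of $\lambda_t$ itself.
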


\paragraph{Proof of Main Results.} Finally, we are ready to present the proof of \Cref{thm:l-inf}. A natural approach would be to apply~\Cref{lem:ball-accel} to $F := \widehat{f}_\rho$. However, to ensure that the minimizer of $F$ has $\ell_2$-norm at most $R$, we will instead work with $F := \widehat{f}^{c, R}_{\mathsf{reg}_\rho}$ as defined in~\Cref{lem:constrained-reduce}. 

\begin{proof}[Proof of \Cref{thm:l-inf}]
We invoke~\Cref{lem:constrained-reduce} 
and apply \BA~ on the function $F := \widehat{f}^{c, R}_{\mathsf{reg}_\rho}$ as defined in~\Cref{lem:constrained-reduce} with respect to $\norm{\cdot}_2$, $c$ being the origin, and 
$\rho := \frac{\epsopt}{L\sqrt{2\log n}}$.
With this choice of $\rho$, by~\Cref{lem:gauss}, we know $|F(x) - \freg^{c,R}(x)|\leq \epsopt$ everywhere. For brevity, we remove the superscript $(c,R)$ from $\freg$ for the remainder of the proof.

We now claim that $F$ has a minimizer $x_*$ with $\norm{x_*}_2\leq 3R$. To see this, let $y\in \R^n$ be a point with $\norm{y}_2 > 3R$ and consider $\tilde{y} = \frac{y}{\norm{y}_2}\cdot R$.
Therefore,
\[
F(y) \geq \freg(y) - \epsopt = f(y) + 2L\cdot(\norm{y}_2 - R) - \epsopt \geq \freg(\tilde{y}) + L\cdot(\norm{y}_2 - R) - \epsopt 
\]
where the first inequality follows from pointwise approximation of $F$ and $\freg$, and the second follows since $f$ is also $L$-Lipschitz in the $\norm{\cdot}_2$ norm, and so 
$\freg(\tilde{y}) = f(\tilde{y}) \leq f(y) + L\cdot \left(\norm{y}_2 - R\right)$. Again using the pointwise approximation of $F$ and $\freg$ we get
$F(y) \geq F(\tilde{y}) - 2\epsopt +  L\cdot(\norm{y}_2 - R)$, and if $\norm{y}_2 > 3R$ using that $\epsopt\in (0,LR]$ we get a contradiction to minimality of $F(y)$.

Note that the stochastic subgradient $g'$ of $\freg$
is given by $g' = g + 2L \cdot v$ for $v \in \partial h(x)$ where $h(x) = \max(0, \norm{x}_2-R)$. Note that $\|v\|_2^2 \leq 1$, so the stochastic gradient $g'$ satisfies
$\E \|g'(x)\|_2^2\leq 2 \sigma^2 + 8 L^2$.
It follows that by setting $r = \rho$, \Cref{lem:parallel-sgd} implies that for any $\phi, \lambda > 0$, there exists a $(\phi, \lambda, \rho)$-ball optimization oracle for $F$ which makes $O(\frac{\sigma^2}{\phi \lambda})$ total queries to $g$ in $O(1)$ parallel rounds (as $\sigma^2 \ge L^2$ by definition of $L$-Lipschitzness).

Next we apply ball acceleration in \Cref{lem:ball-accel} to $F$. We have already argued above that the minimizer $x_*$ of $F$ satisfies $\norm{x_*}_2 \leq 3R$.
We set the parameter $r = \rho = \frac{\epsopt}{L\sqrt{2\log n}}$ and the $R$ and $L$ multiplied by factor $3$. Using these, 
the parameters of \eqref{eq:def-q} in \Cref{lem:ball-accel} become
	\[
		\kappa = \frac{LR}{\epsopt}, \ ~ K = \left(\frac{LR\sqrt{2\log n}}{\epsopt}\right)^{2/3}, \ \text{ and } \lambda_\star = \frac{\epsopt K^2}{R^2}\cdot \log^2 \kappa .
	\]
	Using \Cref{lem:ball-accel} and \Cref{lem:parallel-sgd}, we get that the number of rounds of queries is $(CK\log \kappa) \cdot (C\log^2 (R\kappa/\rho))$ which is
	$\Ot((LR/\epsopt)^{2/3})$.
	
Next, we bound the query complexity. 
Adding up the queries made by the ball-optimization oracle calls made in all iterations per part 1 of~\Cref{lem:ball-accel} is
 \[
\left(CK\log \kappa \right) \cdot \left(C\log^2(R\kappa/\rho)\right) \cdot \frac{\sigma^2}{\lambda_\star^2 \rho^2} = \Ot\left(K \sigma^2 \lambda_\star^{-2} \rho^{-2}\right) \ .
 \]
 Adding up the queries made by the ball-optimization oracle calls made in all iterations per part 2 of~\Cref{lem:ball-accel} is
 \[
 \sum_{j \in [\lceil \log_2 K + C \rceil]} \left(C^2 2^{-j} K \log (R\kappa/\rho) \right) \cdot \frac{2^j C\sigma^2\log^2(R\kappa/\rho)}{\lambda_\star^2 \rho^2} ~~\textrm{which is also}~~ \Ot\left(K \sigma^2 \lambda_\star^{-2} \rho^{-2}\right).
 \]
 Substituting the values from above and $\rho = \frac{\epsopt}{L \sqrt{2 \log n}}$, we get that the total query complexity is $\Ot((\sigma R/\epsopt)^2)$.

 Finally note that \Cref{lem:ball-accel} applied to $F$ computes an $\epsopt$-approximate minimizer which is a $2\epsopt$-approximate minimizer of $\freg$. By setting $\epsopt = \eps/2$, 
 \Cref{lem:constrained-reduce} implies that we can find an $\eps$-approximate minimizer $x$ of $f$ with $\norm{x}_2 \leq R$. This completes the proof of the theorem.
\end{proof}

Combining \Cref{thm:l-inf-std} and \Cref{lem:constrained-reduce}, we obtain the following corollary, which also directly gives our Parallel SFM result (\Cref{thm:cor-sub}). Note that this yields an alternative way of proving \Cref{thm:cor-sub}, separate from our proof in \Cref{sec:from_linf_to_submod}.

\begin{corollary}[Constrained Convex Optimization in $\ell_\infty$]
\label{cor:constrained_l_inf}
    Let $f:\R^n \to \R$ be a convex function, that is $1$-Lipschitz in $\ell_{\infty}$, and $g:\R^n \to \R^n$ a subgradient oracle for $f$. 
    There is an algorithm that outputs a point $y$ with the property that $\norm{y}_\infty \leq 1$ and $f(y) \leq \min_{x : \norm{x}_\infty \leq 1} f(x) + \eps$ in $\Ot(n^{1/3} / \eps^{2/3})$ rounds of subgradient oracle queries and a total of $\Ot(n / \eps^2)$ subgradient oracle queries. 
\end{corollary}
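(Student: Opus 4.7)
The plan is to reduce the constrained problem to an unconstrained one via \Cref{lem:constrained-reduce} and then invoke \Cref{thm:l-inf}.

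First, apply \Cref{lem:constrained-reduce} to $f$ with norm $\norm{\cdot}_\infty$, center $c = 0$, radius $r = 1$, and Lipschitz constant $L = 1$. This produces $\freg := \freg^{0,1}$, which is convex and $3$-Lipschitz in $\ell_\infty$. By \eqref{eq:reg_facts}, every unconstrained minimizer $x_*$ of $\freg$ satisfies $\norm{x_*}_\infty \leq 1$, so $\norm{x_*}_2 \leq \sqrt{n}$, and moreover $\min_{x \in \R^n} \freg(x) = \min_{\norm{x}_\infty \leq 1} f(x)$. Thus an unconstrained minimization of $\freg$ will produce the desired constrained optimum value.

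Next, construct a subgradient oracle for $\freg$ using the one for $f$. From \eqref{eq:freg_defn}, a subgradient of $\freg$ at $x$ is $g(x) + 2 v(x)$, where $v(x) = 0$ if $\norm{x}_\infty < 1$ and $v(x) = \mathrm{sign}(x_i) e_i$ for any coordinate $i \in \argmax_j |x_j|$ when $\norm{x}_\infty > 1$; this costs one query to $g$ plus $O(n)$ local work. Since $f$ is $1$-Lipschitz in $\ell_\infty$ we have $\norm{g(x)}_2 \leq \norm{g(x)}_1 \leq 1$, and $\norm{v(x)}_2 \leq 1$, so the resulting (deterministic) subgradient oracle for $\freg$ has $\ell_2$-norm bounded by $3$, giving $\sigma = O(1)$ in the notation of \Cref{def:l-inf}.

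Now apply \Cref{thm:l-inf} to $\freg$ with parameters $L' = 3$, $R = \sqrt{n}$, $\sigma = O(1)$, and target accuracy $\eps/2$. This returns $x_\eps$ with $\freg(x_\eps) \leq \min_{x} \freg(x) + \eps/2$ using $\Ot((L' R / \eps)^{2/3}) = \Ot(n^{1/3}/\eps^{2/3})$ rounds and $\Ot((\sigma R / \eps)^2) = \Ot(n/\eps^2)$ queries to $g$. Finally, to recover a feasible point, output $y := x_\eps$ if $\norm{x_\eps}_\infty \leq 1$, and otherwise output $y := x_\eps / \norm{x_\eps}_\infty$, which is the projection $x_\eps^{0,1}$ prescribed by \Cref{lem:constrained-reduce}. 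In both cases $\norm{y}_\infty \leq 1$; by \eqref{eq:reg_project} we have $f(y) \leq \freg(x_\eps) \leq \min_{\norm{x}_\infty \leq 1} f(x) + \eps$, completing the proof.

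The main obstacle is largely bookkeeping: verifying that (i) a subgradient of the regularizer $\max\{0, \norm{\cdot}_\infty - 1\}$ is bounded in $\ell_2$ and is cheap to compute from $g$, and (ii) the pointwise inequality \eqref{eq:reg_project} transfers approximate optimality on $\freg$ back to approximate optimality on $f$ restricted to the $\ell_\infty$-unit ball. Both follow directly from \Cref{lem:constrained-reduce}, so no ideas beyond properly instantiating the already-developed black-box tools are required.
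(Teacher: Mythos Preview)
Your proposal is correct and follows exactly the route the paper indicates: it says the corollary is obtained by ``combining \Cref{thm:l-inf-std} and \Cref{lem:constrained-reduce},'' and you have carried out precisely that combination (via the slightly more general \Cref{thm:l-inf}), supplying the subgradient-oracle and projection details that the paper leaves implicit. The only cosmetic nit is that you did not explicitly specify $v(x)$ on the boundary $\norm{x}_\infty = 1$, but either choice there is a valid subgradient, so nothing is missing.
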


Finally, we present the alternative proof of \Cref{thm:cor-sub} using the result above: 

\begin{proof}[Alternative proof of \Cref{thm:cor-sub} via \Cref{cor:constrained_l_inf}]
    Define $h(x) = \frac{1}{3 M} \flov (\frac{1}{2} x + c)$, where $c=\frac{1}{2} \vec{1}$. Note that $h$ is the function obtained by performing an affine transformation mapping $[-1, 1]^n$ to $[0, 1]^n$ and rescaling the Lov\'asz extension $\flov$ by a factor of $3 M$. Note that $h$ is $1$-Lipschitz in $\ell_{\infty}$. Put $\eps = \frac{1}{4 M}$, and obtain, by \Cref{cor:constrained_l_inf}, a point $y \in [-1, 1]^n$ so that $h(y) \le h(x_*) + \eps$ in $\Ot(n^{1/3} M^{2/3})$ rounds of subgradient queries and $\Ot(n M^2)$ subgradient queries. 
    This corresponds to a point $z \in [0, 1]^n$ with the property that $\flov(z) \le \flov(x_*) + \frac{3}{4}$. \Cref{fact:lovasz_ext} shows how to obtain from $z$ a set $S$ so that $f(S) \le f(x_*) + \frac{3}{4}$, and since $f$ takes integer values, $f(S)$ must take the minimum value. As each subgradient takes $n$ evaluation queries to $f$, we have found a minimizer of $f$ in $\Ot(n^{1/3} M^{2/3})$ rounds and $\Ot(n^2 M^2)$ evaluation  queries to $f$. 
\end{proof}

\section{2-Round $O(n^{M+1 })$-Query Algorithm for SFM}
\label{sec:depth2}

\newcommand{\Fone}{\mathcal{F}}
\newcommand{\Sout}{S_{\mathrm{out}}}

Here we present our 2-round, $O(n^{M+1})$-query algorithm for SFM. The algorithm $\mathsf{AugmentingSets}$, given in \Cref{alg:n^M}, iterates over every $M$-sparse $S \subseteq [n]$ (i.e. $|S| \leq M$) (denoted $\Fone$). For every such $S$ the algorithm then builds the augmented set $A(S)$, consisted of the union of $S$ and all elements $i$ that have non-positive marginal with respect to $S$, i.e., $f(S\cup\{i\}) \leq f(S)$. The algorithm then outputs the set $A(S)$ that has the smallest value.

As we show below, computing all the $A(S)$ for $M$-sparse sets $S$ can be done in $1$ round and $O(n^{M+1})$-queries, and then computing an element of $A(S)$ with the smallest value can be done in another round and $O(n^{M+1})$-queries. The correctness of the algorithm is guaranteed by the fact that $A(S)$, for some $|S| \leq M$, is the maximal minimizer of $f$, and therefore the algorithm outputs a set with the optimum value\footnote{Note, however, that the algorithm doesn't necessarily output the maximal minimizer itself.}. 

\begin{algorithm}[htp!]
\caption{Augmenting Sets Algorithm}\label{alg:n^M}
    \KwData{Submodular function $f : 2^{[n]} \rightarrow \Z$ and $M \in \Z_{> 0}$ such that $|f(S)|\leq M$ for all $S \subseteq [n]$ and $f(\emptyset) = 0$} 
    \KwResult{$\Sout$, a minimizer of function $f$}
    \SetKwFunction{augmenting}{AugmentingSets}
    \SetKwProg{Fn}{Function}{:}{}
    \Fn{\augmenting{$f, M$}}{
    $\Fone \gets \{S \subset [n] \mid |S| \le M\}$  \label{line:round1_start}\;
    \For{$S \in \Fone$}{
        $A(S) \gets S \cup \{i \notin S ~ | ~ f(S \cup \{i\}) \le f(S)\}$ \tcp*{Compute the augmentation of $S$}
            \label{line:good_marginals}
            \label{line:query1}
    }   \label{line:round2_end}\label{line:for_loop_end}
    \Return $\Sout \in \argmin_{S \in \Fone} f(A(S))
    \label{line:return}
    $\;
    }
\end{algorithm}

Our main result of this section is the following theorem.

\tworoundSFM*

\begin{proof}
First, we bound the parallel and query complexity of the algorithm. \Cref{line:round1_start} to \Cref{line:round2_end} can be implemented in 1 round as they simply evaluate $f$ on all subsets of $[n]$ of size $\leq M+1$. \Cref{line:return} can be implemented in 1 round by evaluating $f(A(S))$ for each $S \in \Fone$ in parallel. Consequently, the algorithm is implementable in 2 rounds. To bound the query complexity, note that $|\Fone| \leq \sum_{k=0}^{M} {n \choose k} = O(n^{M})$ and the algorithm only makes $O(n)$ queries for each $S \in \Fone$ ($O(n)$ in \Cref{line:query1} and $1$ in \Cref{line:return}). Consequently, the algorithm makes $O(n^{M + 1})$ total queries.
    
It only remains to show that the algorithm outputs a minimizer of $f$. 
We prove this by showing that for some $S \in \Fone$, its augmented set $A(S)$ is the maximal minimizer of $f$, i.e., the union of all minimizers, which is a minimizer itself by submodularity. 
This suffices as the algorithm outputs the $A(S)$ for $S \in \Fone$ of smallest value. 

Let $S_*$ be the maximal minimizer of $f$. We build a subset $T \subseteq S_*$ of size $|T| \leq M$, which we call \textit{anchor}, as follows.  
Start with $T = \emptyset$ and an arbitrary ordering of elements of $S_*$. For each element $i\in S_*$ in this order, we add it to the current $T$ if 
$f(T\cup \{i\}) > f(T)$. Since $f$ only takes integer values, this means that whenever we add an element $i$ to $T$, the value of $f(T)$ goes up by at least $1$.
At the end of the process we have $f(T) \geq |T|$. 
Since $f(T) \leq M$, it follows that $|T| \leq M$ and therefore $T \in \Fone$.

We now claim that $A(T) = S_*$. For any element $i\in S_* \setminus T$, we didn't add $i$ to $T$ because $f(T_i \cup \{i\}) - f(T_i) \leq 0$, where $T_i \subseteq T$ is the value of $T$ when element $i$ is visited in the procedure above. 
By submodularity, $f(T\cup \{i\}) - f(T) \leq f(T_i \cup \{i\}) - f(T_i) \leq 0$.
This implies that $S_* \subseteq A(T)$. Also note that for any $j\notin S_*$, we have $f(S_* \cup \{j\}) > f(S_*)$ by the {\em maximality} of $S_*$. It again follows from submodularity and $T \subseteq S_*$ that 
$f(T\cup \{j\}) > f(T)$, which implies $j \notin A(T)$. 
This proves $A(T) = S^*$ and completes the proof of the theorem.
\end{proof}

\section{Conclusion}

In this paper we designed two new parallel algorithms for minimizing submodular functions $f:2^{[n]}\to \Z \cap [-M,+M]$ 
with round complexities $\otilde(n^{1/3} M^{2/3})$ and $2$, and query complexities $\otilde(n^2 M^2)$ and $O(n^{M+1})$, respectively. These $M$-dependent sublinear dependence on $n$ in the round complexities
stand in contrast to the $\widetilde{\Omega}(n)$-lower bound on the number of rounds required for SFM when $M = n^{\Theta(n)}$. 
On the way to the first result, we obtain a new efficient parallel algorithm for $\eps$-approximate minimization of $\ell_\infty$-Lipschitz convex functions over $[0,1]^n$ with round-complexity $\Ot(n^{1/3}\eps^{-2/3})$. Given results of~\cite{N94,DG19} the dependence on $n$ is optimal for constant $\eps$. 

Two related open questions are whether one can obtain $o(n)$-round SFM algorithms with polylogarithmic dependence on $M$, and whether one can obtain algorithms for  $\eps$-approximate minimization of $\ell_\infty$-Lipschitz convex functions over $[0,1]^n$ in $\Ot(n^{1/3}\poly\log(1/\eps))$-rounds, or can one prove lower bounds ruling them out. Another related open question is whether one can 
perform SFM in $O(\poly(M))$-rounds with query complexity $\poly(n)$.

\bibliographystyle{alpha}
\bibliography{bib.bib}

\end{document}